\newtheorem{theorem}{Theorem}
\newtheorem{lemma}{Lemma}
\useunder{\uline}{\ul}{}
\definecolor{Instruction}{RGB}{237, 84, 102}
\definecolor{Example}{RGB}{81, 140, 192}
\definecolor{Question}{RGB}{0, 0, 0}
\definecolor{Answer}{RGB}{140, 192, 81}
\newcommand{\ie}{\emph{i.e., }}
\newcommand{\eg}{\emph{e.g., }}
\newcommand{\wrt}{\emph{w.r.t. }}
\DeclareMathOperator*{\argmin}{arg\,min}
\DeclareMathOperator*{\argmax}{arg\,max}
\title{Towards Fairness in Global Optimization: Mini-batch Sample Strategies for Recommendation Systems}
\title{FairDual: A Mini-Batch Sample Strategy for Multi-Group Max-min Fairness in Recommendation}
\title{FairDual: Optimizing Non-Finite-Sum Objectives Under Amortized Max-Min Fairness Constraints}
\title{Bridging Jensen Gap for Max-Min Group Fairness Optimization in Recommendation}
\author{
  Chen Xu\textsuperscript{1},  
  Yuxin Li\textsuperscript{1},  
  Wenjie Wang\textsuperscript{2},  
  Liang Pang\textsuperscript{3},
  Jun Xu\textsuperscript{1}\thanks{Corresponding author. Work partially done at Engineering Research Center of Next-Generation Intelligent Search and Recommendation, Ministry of Education.},
  \textbf{Tat-Seng Chua\textsuperscript{4}}
  \\
  \textsuperscript{1}Gaoling School of Artificial Intelligence, Renmin University of China, Beijing, China \\
  \textsuperscript{2} School of Information Science and Technology, \\University of Science and Technology of China, Hefei, China \\
  \textsuperscript{3} Institute of Computing Technology, Chinese Academy of Sciences, Beijing, China,\\
  \textsuperscript{4} NExT++ Research Center, National University of Singapore, Singapore
  \\
  \texttt{\{xc\_chen,yuxin\_li\}}@ruc.edu.cn,
  \texttt{wenjiewang96}@gmail.com,\\
  \texttt{pangliang}@ict.ac.cn, \texttt{junxu}@ruc.edu.cn,
  \texttt{dcscts}@nus.edu.sg\\
}
\begin{document}

\maketitle

\begin{abstract}

Group max-min fairness (MMF) is commonly used in fairness-aware recommender systems (RS) as an optimization objective, as it aims to protect marginalized item groups and ensures a fair competition platform. However, our theoretical analysis indicates that integrating MMF constraint violates the assumption of sample independence during optimization, causing the loss function to deviate from linear additivity. Such nonlinearity property introduces the Jensen gap between the model's convergence point and the optimal point if mini-batch sampling is applied. Both theoretical and empirical studies show that as the mini-batch size decreases and the group size increases, the Jensen gap will widen accordingly. 
Some methods using heuristic re-weighting or debiasing strategies have the potential to bridge the Jensen gap. However, they either lack theoretical guarantees or suffer from heavy computational costs. To overcome these limitations, we first theoretically demonstrate that the MMF-constrained objective can be essentially reformulated as a group-weighted optimization objective. Then we present an efficient and effective algorithm named FairDual, which utilizes a dual optimization technique to minimize Jensen gap. Our theoretical analysis demonstrates that FairDual can achieve a sub-linear convergence rate to the globally optimal solution and the Jensen gap can be well bounded under a mini-batch sampling strategy with random shuffle. Extensive experiments conducted using six large-scale RS backbone models on three publicly available datasets demonstrate that FairDual outperforms all baselines in terms of both accuracy and fairness. Our data and codes are shared at~\url{https://github.com/XuChen0427/FairDual}.


\end{abstract}

\section{Introduction}


Group max-min fairness (MMF) has gained significant attention in industrial recommender systems (RS), as it seeks to provide support for marginalized item groups, where item groups are usually divided by item categories or providers~\citep{xu2023p, fairrec, xu2024fairsync}.  
For example, optimizing MMF can alleviate the low exposure problem of small sellers in the Amazon RS platform~\citep{fairrec}.  
As illustrated in European competition law~\citep{jones2014eu}, protecting these weak supplier groups is essential for preventing large platforms from engaging in unfair strategies, thereby ensuring fair competition~\citep{matten2008implicit}.



Formally, the group MMF optimization objective entails calculating the overall utility of each group and maximizing the utility of the least advantaged group. Typically, group utility computation often necessitates aggregating the overall recommendation ranking list over a specified period~\citep{xu2023p, nips21welf, xu2024fairsync}, for example, the group utility could be the exposures of action movies within a day. 
Due to the limited ranking slots, adjusting the exposure of items to improve the utility of one item group will inevitably affect the ranking outcomes of other item groups. 
Therefore, the loss function with the MMF constraint for RS violates a crucial assumption: the independence of samples, resulting in the MMF loss of different item groups not adhering to linear additivity (see theoretical proof in Section~\ref{sec:analsysis}).

We theoretically and empirically show that the non-linear additivity property of the MMF-constrained objective will introduce a Jensen gap~\citep{gao2017bounds, ullah2021determination} between the model's convergence point and the optimal point when if mini-batch sampling is applied for optimization (see Section~\ref{sec:analsysis}). 
Meanwhile, it is proved that as the mini-batch size decreases and the group size increases, the Jensen gap will become more pronounced in the optimization process, significantly harming the model's performance. 
However, mini-batch sampling strategies are essential for accelerating the model training process, especially as data and model sizes in RS continue to grow, such as the development of large language models (LLMs) based RS~\citep{bao2023bi, lin2024bridging, lin2024data}.

Some previous approaches have the potential to bridge the Jensen gap for RS. 
One type of heuristic approach can be applied to bridge the Jensen gap, such as sample re-weighting strategies~\citep{chen2023fairly, wen2022distributionally}, which dynamically assigns a higher weight to the weaker group across different batches.
However, the effectiveness of this research line is limited due to the lack of theoretical guarantees. 
Another type of work utilizes machine learning techniques that can help to optimize the non-linear additive loss functions. 
For example, \cite{abernethy2022active, cousins2022uncertainty} have proposed sampling strategies to obtain unbiased samples, while some methods utilize debiasing gradient descent~\citep{demidovich2023guide, agarwal2018reductions} to introduce a bias correction term in the gradient. However, these methods cannot be applied to existing large-scale industrial RS, as they often require a convex optimization process that is impractical for RS that typically serves millions of users and hundreds of groups.

 

To overcome the challenges for bridging the Jensen gap, in this paper, we firstly theoretically demonstrate that the optimization objectives when incorporating group MMF constraint can be essentially reformulated as a group-weighted accuracy optimization objective on different groups.
Then, we introduce a large-scale friendly, and effective algorithm called FairDual
to optimize the group-weighted objective for minimizing the Jensen gap. 
Specifically, we formulate the fairness-constraint problem as its dual, where the dual variable (referred to as the shadow price in economics~\citep{dreze1990policy}) can be interpreted as the sample weight assigned to each sample in the mini-batch optimization process.
Then, FairDual leverages dual-optimization techniques to optimize the weight of different group losses utilizing dual mirror gradient techniques efficiently.


Our theoretical analysis demonstrates that FairDual can achieve a sub-linear convergence rate to the globally optimal point under a random shuffling mini-batch training style. Moreover, the Jensen gap can be well bounded (See Section~\ref{sec:bound}) even when confronted with small mini-batch sizes and large group sizes. Extensive experiments using six large-scale RS backbone models on three publicly available datasets show that FairDual consistently reduces the Jensen gap and outperforms all baselines with a large margin in terms of both accuracy and fairness while achieving better efficiency.


\section{Related Work}
\textbf{Fairness concept in RS.} One common categorization is based on the involvement of different stakeholders~\citep{abdollahpouri2020multistakeholder, abdollahpouri2019multi}, divides fairness into individual fairness~\citep{marras2022equality, li2021towards}, which aims to ensure equitable treatment for individual users, and group fairness, which classifies items into various groups~\citep{xu2023p, xu2024fairsync, fairrec, cpfair, wu2021tfrom}. Various approaches have been proposed to optimize fairness utilizing different fairness objectives. For instance, ~\cite{fairrec} proposed using the Shapley value, while ~\cite{do2022optimizing} suggested optimizing the Gini Index. On the other hand, works such as ~\cite{xu2023p, nips21welf, xu2024fairsync} advocate for optimizing MMF, which requires every group should receive a ``minimum wage''. Typically, we mainly focus on the group MMF, which is closer to the industrial scenarios since certain studies propose to ensure minimum item exposures for attracting providers to join or enhancing the visibility of specific item categories~\citep{fairrec, xu2024fairsync, zhu2020measuring}.


\textbf{Optimizing fairness in RS.} When optimizing fairness, previous research often categorizes methods into three categories based on recommendation phases, including pre-processing~\citep{Calmon17, xiong2024fairwasp, xu-etal-2024-study}, post-processing~\citep{xu2023p, fairrec, wu2021tfrom, TaxRank} and in-processing~\citep{narasimhan2020pairwise, Tang23FairBias}. 
In this paper, we theoretically demonstrate that the in-processing method constrained by group MMF can be essentially reformulated as a re-weighting approach. Prior research employed static or dynamic group weights to achieve fairness. For static weights, ~\cite{jiang2024itemside, xiong2024fairwasp} proposes to set item weight according to its popularity and the Wasserstein distance of two groups, respectively.  For dynamic weighting, some work~\citep{chen2023fairly, chai2022fairness, wen2022distributionally} propose to design weights based on the training state, while \cite{hu2023adaptive} employs a dynamic re-weighting strategy to mitigate distribution shifts between training and test data. ~\cite{roh2020fairbatch} also proposes to set different batch sizes to optimize fairness. However, these methods are either designed for simple cases involving only two groups, or they lack theoretical guarantees when applied to group MMF settings.

\textbf{Optimizing fairness in ML.} In machine learning (ML), previous work aims to optimize different fairness functions to achieve various social welfare objectives. For example, the power-mean welfare family seeks to balance accuracy and fairness objectives by applying the exponential form~\citep{cousins2021axiomatic, cousins2023revisiting} and max-min fairness~\citep{abernethy2022active, agarwal2018reductions} aims to support the worst-off groups. When optimizing fairness, we commonly try to optimize a nonlinear fairness function. When adopting optimization methods such as stochastic gradient descent (SGD), an unavoidable bias will exist~\citep{demidovich2023guide, hu2020biased}. To bridge this bias, previous ML methods have employed sampling strategies~\citep{abernethy2022active, cousins2022uncertainty} to obtain unbiased samples, while some methods have utilized debiasing SGD~\citep{demidovich2023guide, agarwal2018reductions} to mitigate the bias. However, these works cannot be applied to large-scale industrial RS since they often require a convex optimization process that is impractical for RS tasked with serving millions of users and hundreds of groups. To efficiently bridge the Jensen gap, our method improves debiasing SGD by developing a large-scale friendly mirror SGD learning algorithm. 




\section{Formulation}

In RS, let $\mathcal{U}, \mathcal{I}$ be the set of users and items, and each item $i\in \mathcal{I}$ is associated with a unique group $g\in \mathcal{G}$, where the set of items associated with $g$ is denoted as $\mathcal{I}_g$. In RS, an item $i$ may belong to a different group $g$ (\eg a movie can be categorized under various genres such as action, or drama). We define the number of groups to which the item $i$ belongs as $n_i$.

Suppose that the RS manages a set of user-item historical interactions $\mathcal{D} = \{(u,i, 
c_{u,i})\}$, where each tuple $(u, i, c_{u,i})$ records that a user $u\in\mathcal{U}$ accessed the system and interacted with an item $i\in\mathcal{I}$ with behavior $c_{u,i}\in\{0,1\}$. $c_{u,i} = 1$ means that the user $u$ clicked/purchased the item $i$, and 0 otherwise. The task of recommendation becomes, based on the user-item interactions in $\mathcal{D}$, learning an empirical estimation $\hat{c}_{u,i} = f(u, i)$ for real label $c_{u,i}$. Then RS will suggest $K$ items to the user according to predicted preference scores $\hat{c}_{u,i}$, with the ranking list denoted as $L_K(u) \in \mathcal{I}^{K}$. In general, the $f(\cdot)$ can be either the traditional matrix factorization model~\citep{he2016fast} or more advanced LLMs-based recommender models~\citep{bao2023bi}.

Following the practice in recommendation tasks~\citep{he2017neural, he2016fast}, the cross-entropy loss $-c_{u,i}\log (\hat{c}_{u,i})$ is regarded as a common and better choice compared to other loss functions. Meanwhile, to fulfill the group MMF requirement~\citep{fairrec, xu2024fairsync}, the recommendation model also strives to maintain the expected utility of a specific group 
$g$ (where the group's utility is defined as the negative sum of the entropy loss within the group) exceeds a basic threshold $M$. MMF constraint aims to ensure every group can receive the required group-specific ``minimum wage'' during the training phases. Formally, we can write the ideal optimization objective as follows:
\begin{equation}
\label{eq:obj}
    \begin{aligned}
        \mathcal{L} = \min_{\hat{c}_{u,i}} ~ \underbrace{-\sum_{u\in\mathcal{U}}\sum_{i\in \mathcal{I}} c_{u,i}\log (\hat{c}_{u,i})}_{\text{recommendation accuracy loss}}\quad \textrm{s.t.}~ \underbrace{\max_{g\in\mathcal{G}}\sum_{u\in\mathcal{U}} \sum_{i\in L_K(u)} -\frac{\mathbb{I}(i\in \mathcal{I}_g)}{n_i m_g}c_{u,i}\log(\hat{c}_{u,i}) \leq M}_{\text{MMF constraint: loss of worst-off group $g$ should at or smaller than $M$}}, \\
    \end{aligned}
\end{equation}


where $\mathbb{I}(\cdot)$ denotes the indicator function, and the number of users $|\mathcal{U}|$ could represent the daily or weekly user traffic. The $m_g$ can be regarded as the weight for different group $g$. Note that, following the practice in time-aware RS~\citep{kang2018self, sun2019bert4rec}, we utilize the recent $H$ interactions,  which represent the truncated user historical behavior numbers.




\section{Problem Analysis}\label{sec:analsysis}
In real-world scenarios, the number of users $|\mathcal{U}|$ is often large, and a mini-batch sampling strategy with a batch size of B is often necessary due to the large computational costs. Each batch only contains a subset of users. However, we show that the non-linear additivity property of the MMF-based objective will introduce the Jensen gap between the model’s convergence point and the optimal point when employing mini-batch sampling strategies.

In this section, we analyze why the Jensen gap exists and how it will influence the model's convergence in both theoretical and empirical ways.

\subsection{Theoretical Analysis}

Firstly, we will re-write the optimization objective using the following theorem:
\begin{theorem}\label{theo:alpha_fair}
    For a vector $\bm{x}\in\mathbb{R}^n$, $\bm{x}^i$ denotes the element of the vector raised to the power of $i$. Similarly, $\log(\bm{x})$ denotes the element of the vector reduced as $\log(\bm{x}_i)$. 
    Let $\bm{A}\in\mathbb{R}^{|\mathcal{I}|\times|\mathcal{G}|}$ is the item-group adjacent matrix, and $\bm{A}_{ig} = 1$ indicates item $i\in \mathcal{I}_g$, and 0 otherwise. Let $\bm{w}\in\mathbb{R}^{|\mathcal{I}|} = [-\sum_{u\in\mathcal{U}}c_{u,i}\log(\hat{c}_{u,i})]_{i\in\mathcal{I}}$ and its feasible region is $\mathcal{W}=\{\bm{w}|\sum_{i\in\mathcal{I}} c_{u,i} \leq K, \forall u\in\mathcal{U}, c_{u,i}\in [0,1]\}$. Then there exist $t \in [0, \infty)$ (value of $t$ relates to the value of $M$) and a weight vector $\bm{b}\in\mathbb{R}^{|\mathcal{G}|}\ge 0$, s.t. Equation~(\ref{eq:obj}) can be optimized as:
    \begin{equation}\label{eq:t_fair}
      \mathcal{L} = \min_{w\in\mathcal{W}}   
       \bm{b}^{\top}(\hat{\bm{A}}^{\top}\bm{w})^{1+t}
    \end{equation}
    where $\hat{\bm{A}}$ is the row-normalized matrix for $\bm{A}$: $\hat{\bm{A}}=\text{diag}(\bm{A}\bm{1})^{-1}\bm{A}$. $\text{diag}(\bm{x})$ denotes to construct a diagonal matrix based on vector $\bm{x}$.
\end{theorem}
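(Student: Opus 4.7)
The plan is to exploit the row-normalization property $\hat{\bm{A}}\bm{1}_{|\mathcal{G}|}=\bm{1}_{|\mathcal{I}|}$ (each row of $\hat{\bm{A}}$ sums to one) together with Lagrangian duality. First, I would rewrite the accuracy loss in the new variable: summing $w_i$ over items equals summing the group losses, i.e.\ $\sum_i w_i=\sum_g(\hat{\bm{A}}^\top\bm{w})_g=\bm{1}^\top\hat{\bm{A}}^\top\bm{w}$. A parallel computation---swapping the order of the double sum and applying the definition $\hat{A}_{ig}=\mathbb{I}(i\in\mathcal{I}_g)/n_i$---reduces the MMF constraint to the clean form $\max_g(\hat{\bm{A}}^\top\bm{w})_g/m_g\le M$. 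At this point Equation~(\ref{eq:obj}) is a convex program in $\bm{w}$ over the polytope $\mathcal{W}$, with linear objective and $|\mathcal{G}|$ linear inequality constraints.

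Second, I would attach a nonnegative multiplier $\lambda_g$ to each group constraint $(\hat{\bm{A}}^\top\bm{w})_g\le M m_g$ and form the Lagrangian
$$L(\bm{w},\bm{\lambda})=\sum_{g}(1+\lambda_g)\,(\hat{\bm{A}}^\top\bm{w})_g \;-\; M\sum_{g}\lambda_g m_g.$$
The constant term drops out of the inner minimization, so $\min_{\bm{w}\in\mathcal{W}}L(\bm{w},\bm{\lambda}^\star)=\min_{\bm{w}\in\mathcal{W}}\bm{b}^\top\hat{\bm{A}}^\top\bm{w}$ with $b_g:=1+\lambda_g^\star$. This recovers the claimed form at the distinguished value $t=0$. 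Strong duality---which I would justify by Slater's condition on the bounded convex set $\mathcal{W}$ whenever $M$ lies in the strict interior of the feasible range---then forces the minimizer of this weighted linear objective to coincide with the minimizer of the original constrained problem.

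Third, for general $t\in(0,\infty)$ I would invoke the $p$-norm identity $\max_g x_g=\lim_{t\to\infty}\bigl(\sum_g x_g^{1+t}\bigr)^{1/(1+t)}$ to interpret the MMF cap as a power-mean bound. Matching the first-order stationarity conditions of $\bm{b}^\top(\hat{\bm{A}}^\top\bm{w})^{1+t}$ against the KKT conditions of the MMF-constrained primal pins down $(\bm{b},t)$ as functions of $M$: as $M$ tightens, $t$ must grow so that the implicit penalty on the worst-off group sharpens, exactly recovering the monotone dependence ``value of $t$ relates to the value of $M$'' that the theorem asserts.

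The hard part will be the rigorous $t>0$ reduction. Showing that the power-form objective shares its minimizer with the constrained primal---rather than merely approximating it in a limit---requires a careful matching of first-order conditions, exploiting that at the optimum only the binding groups carry Lagrange mass and calibrating $\bm{b}$ so that the power weights $(1+t)\,b_g (\hat{\bm{A}}^\top\bm{w}^\star)_g^{t}$ reproduce exactly the active multipliers $1+\lambda_g^\star$. I expect the existence claim to follow from a continuity/intermediate-value argument on $(t,\bm{b})$ as $M$ varies, rather than from any closed-form construction.
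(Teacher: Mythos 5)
Your opening steps track the paper's own reduction: rewriting Equation~(\ref{eq:obj}) so the MMF constraint reads $\max_{g}(\hat{\bm{A}}^{\top}\bm{w})_g/m_g\le M$ and then invoking Lagrangian duality. From there you diverge, and the divergence is where the trouble is. You attach one multiplier per group constraint and conclude that strong duality already delivers the theorem at $t=0$ with $b_g=1+\lambda_g^{\star}$. Two problems. First, strong duality gives equality of optimal \emph{values} (and only up to the constant $M\sum_g\lambda_g^{\star}m_g$ you discard), and it gives that every primal optimum minimizes $L(\cdot,\bm{\lambda}^{\star})$; it does not give that every minimizer of the reweighted linear objective is primal-feasible or primal-optimal. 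A one-dimensional example: minimize $x$ over $x\in[0,2]$ subject to $x\ge 1$; at $\lambda^{\star}=1$ the Lagrangian is constant in $x$, so its minimizers include infeasible points. Hence ``forces the minimizer to coincide'' is an overclaim absent a nondegeneracy assumption. Second, and more structurally, a $t=0$ resolution would make the surrogate loss linear-additive, which is in tension with the role this theorem plays downstream: Theorem~\ref{theo:error}'s Jensen gap is nonzero precisely because $t\neq 0$ when the fairness constraint binds. The paper avoids this by dualizing only the single scalar constraint on the worst-off group, i.e.\ it keeps the nonlinear term $\lambda\max_{g}\bm{\gamma}_g(\hat{\bm{A}}^{\top}\bm{w})_g$ inside the penalized objective rather than linearizing it away with per-group multipliers.

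For the case that actually matters, $t>0$, your proposal is a plan rather than a proof: you cite the $p$-norm identity and say you expect a KKT-matching or intermediate-value argument to pin down $(t,\bm{b})$, but the calibration $(1+t)\,b_g(\hat{\bm{A}}^{\top}\bm{w}^{\star})_g^{t}=1+\lambda_g^{\star}$ is exactly the step that has to be carried out, and you leave it open. The paper's route at this point is slightly different: it writes the penalized objective as $g(\hat{\bm{A}}^{\top}\bm{w};\bm{1};0)+\lambda\, g(\hat{\bm{A}}^{\top}\bm{w};\bm{\gamma};\infty)$ with $g(\bm{x};\bm{k};s)=(\bm{k}^{\top}\bm{x}^{1+s})^{1/(1+s)}$, and argues by continuity in $s$ over the convex feasible region plus an intermediate-value argument that some intermediate member $\bm{b}^{\top}(\hat{\bm{A}}^{\top}\bm{w})^{1+t}$ represents the combination, explicitly conceding that $t$ has no closed form (the paper is itself informal here). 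So the concrete gaps to fix are: repair or drop the $t=0$ coincidence claim (or state the nondegeneracy it needs, and note it conflicts with the intended $t>0$ regime), and replace ``I expect'' with an actual interpolation or stationarity-matching argument establishing existence of $(t,\bm{b})$.
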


\begin{figure}
    \centering
    \includegraphics[width=0.95\linewidth]{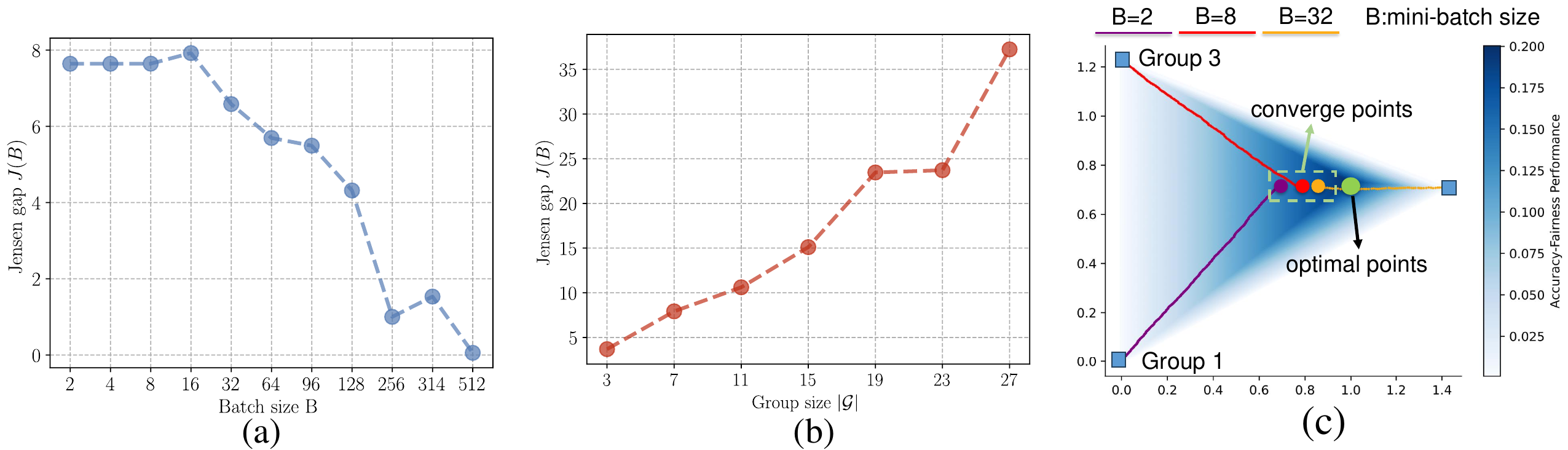}
    \caption{Loss converges simulation with 1000 users and 1000 items. Sub-figure (a) and (b) illustrate the distance away from the optimal point (\ie Jensen gap) \wrt mini-batch and group size, respectively. Figure (a) was conducted with the same group size (G=7) under different batch sizes, while Figure (b) was conducted with the same batch size (B=32) under different group sizes. Sub-figure (c) describes the converged trace under different batch sizes.}
    \label{fig:intro}
\end{figure}

The detailed proof of Theorem~\ref{theo:alpha_fair} can be seen in Appendix~\ref{app:prof_alpha_fair}. When transforming the original optimization process in Equation~(\ref{eq:obj}) into Equation~(\ref{eq:t_fair}),  we can easily observe that the loss function does not adhere to linear additivity. Then the
Jensen gap will arise under mini-batch sample strategy by formulating it using the following theorem:

\begin{theorem}\label{theo:error}
    Under mini-batch sample strategies, we partition the user set $\mathcal{U}$ into $|\mathcal{U}|/B$ subsets and perform optimization on each subset. Let $\bm{e}^j\in\mathbb{R}^{|\mathcal{G}|}$ be the group accumulated utility under $j$-th partition, where each element 
    $\bm{e}_{j,g} = -\sum_{u\in\mathcal{U}_j}\sum_{i\in\mathcal{I}_g}c_{u,i}\log (\hat{c}_{u,i})$.
    Let $f(\bm{x})=x^{t+1}$. We can write the mini-batch optimizing loss objective $\mathcal{L}^B$ as:
        $
            \mathcal{L}^B = \min \sum_{j=1}^{|\mathcal{U}|/B} \bm{b}^{\top}f(\bm{e}_j),
        $
        where $\mathcal{U}_j$ is the $j$-th partition of the user set $\mathcal{U}$. Then, Jensen gap~\citep{gao2017bounds, ullah2021determination} is defined as:
        \begin{equation}\label{eq:Jensen_gap}
            J(B) = |\mathcal{L}^B - \mathcal{L}|=|\mathcal{L}^B - \min \bm{b}^{\top}f(\sum_{j=1}^{|\mathcal{U}|/B} \bm{e}_j)| \neq 0.
        \end{equation}

   When optimizing Equation~(\ref{eq:t_fair}) under the mini-batch sampling style, the mini-batch size $B$ becomes smaller and group size $\mathcal{G}$ becomes larger, the  Jensen gap $J(B)$ will become larger. Moreover, when the mini-batch size becomes smaller, we are more likely to underestimate the original loss, \ie $\mathcal{L}^B \leq \mathcal{L}$. The loss underestimation will result in the Jensen gap.
   
\end{theorem}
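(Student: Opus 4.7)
The plan is to drive the whole argument from the strict super-additivity of $f(x)=x^{1+t}$ on $[0,\infty)$ for $t>0$. First I would establish the elementary inequality $\bigl(\sum_j a_j\bigr)^{1+t} \ge \sum_j a_j^{1+t}$ for non-negative reals, which follows from $\bigl(\sum_j a_j\bigr)^{t}\ge a_i^{t}$ applied term-by-term. Applied componentwise to the group utility vectors $\bm{e}_j$, and using that $\bm{b}\ge 0$, this yields the pointwise inequality
\begin{equation*}
\bm{b}^{\top} f\!\Bigl(\textstyle\sum_{j=1}^{|\mathcal{U}|/B} \bm{e}_j\Bigr) \;\ge\; \sum_{j=1}^{|\mathcal{U}|/B}\bm{b}^{\top} f(\bm{e}_j).
\end{equation*}
Since both sides are minimised over the same feasible region $\mathcal{W}$, taking infima preserves the direction and gives the signed statement $\mathcal{L}^B\le\mathcal{L}$.

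Next I would upgrade this to a strict inequality, so that $J(B)>0$. Because $f$ is strictly convex on $(0,\infty)$ when $t>0$ and $f(0)=0$, the super-additivity is strict whenever at least two batches record positive utility for a common group $g$, which is generic once $|\mathcal{U}|/B \gg |\mathcal{G}|$. Since $\bm{b}$ has at least one strictly positive coordinate, the resulting weighted sum is strictly positive.

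For the quantitative dependence on $B$, I would set $n = |\mathcal{U}|/B$ and consider the per-batch average $\bar{\bm{e}} = n^{-1}\sum_j \bm{e}_j$. In the balanced case where every $\bm{e}_j$ equals $\bar{\bm{e}}$, the gap reduces exactly to $\bm{b}^{\top} f(\bar{\bm{e}})\,(n^{1+t} - n)$, which grows super-linearly in $n$ and hence strictly decreases as $B$ increases. A second-order Taylor remainder of the form $f(a+b)-f(a)-f(b)= t(1+t)\int_0^a\!\int_0^b (x+y)^{t-1}\,dx\,dy$ extends this scaling to the general case by lower-bounding the gap in terms of the $\binom{n}{2}$ pairwise batch cross-contributions, making the monotonic dependence on $B$ explicit.

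The hard part will be the rigorous dependence on $|\mathcal{G}|$, since this is a structural fact about how the partitions intersect the item–group matrix $\bm{A}$ rather than a direct consequence of convexity. I expect to need either a uniform-size or random-shuffle partitioning assumption (matching the setting the authors later invoke for their convergence analysis) so that the between-batch variance of $\bm{e}_j$ can be translated, via the second-order Taylor bound above, into an explicit lower bound on $J(B)$ that is monotone in the group count; alternatively, one could argue that as $|\mathcal{G}|$ grows the per-group utilities $\bm{e}_{j,g}$ concentrate on a sparser pattern across batches, sharpening the per-coordinate super-additivity gap and accumulating more non-negative terms under $\bm{b}^{\top}(\cdot)$.
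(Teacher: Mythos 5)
Your proposal is correct in its core mechanism and reaches the same conclusions, but it travels a genuinely different route than the paper. You work with pointwise super-additivity of $f(x)=x^{1+t}$ (componentwise over groups, then weighted by $\bm{b}\ge 0$) and pass to minima over the common feasible set, which cleanly gives $\mathcal{L}^B\le\mathcal{L}$; the paper instead proves a small constrained-optimization lemma (its Lemma~1): minimizing $\sum_{l}f(y_l)$ under a budget $\sum_l y_l\le c$ forces an equal split $y_l=c/n$ at the optimum, so the minimized value scales like $c^{1+t}n^{-t}$ and is monotonically decreasing in the number of partitions $n=|\mathcal{U}|/B$. That Lagrangian/equal-split device is what lets the paper compare \emph{minimized} losses across two different batch sizes directly, yielding the chain $\mathcal{L}^{B_2}\le\mathcal{L}^{B_1}\le\mathcal{L}$ for $B_2\le B_1$ without any assumption that the two partitions are nested. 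Your approach buys a cleaner and more elementary derivation of the underestimation and of strictness (your genericity condition for $J(B)>0$ is actually more careful than the paper's bare ``$f$ is nonlinear, hence $\neq$'' assertion), and your Taylor-remainder cross-term bound is a more quantitative handle on the gap than anything in the paper.

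Two steps need tightening. First, your quantitative monotonicity argument is phrased at fixed configurations (the balanced case, pairwise cross-contributions), but $J(B)$ is a difference of minima; you should add the one-line transfer $\min h-\min g\ \ge\ h(\bm{w}^\ast)-g(\bm{w}^\ast)$ where $\bm{w}^\ast$ minimizes the full-batch loss $h$, so that a uniform lower bound on the pointwise super-additivity gap does bound $J(B)$. Second, to compare $\mathcal{L}^{B_2}$ with $\mathcal{L}^{B_1}$ for two \emph{different} batch sizes, pointwise super-additivity only helps when the finer partition refines the coarser one; for arbitrary partitions you need either a nestedness/random-shuffle assumption or the paper's budget-relaxation trick (bound each group utility by a constant and compare the relaxed equal-split optima). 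Finally, note that the dependence on $|\mathcal{G}|$, which you flag as the hard part, is not rigorously established in the paper either — it is argued only informally via ``degree of data partitioning'' — so your proposed variance/concentration route under a uniform-size or random-shuffle assumption would, if carried out, go beyond what the paper actually proves.
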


The detailed proof of Theorem~\ref{theo:error} can be seen in Appendix~\ref{app:prof_error}. The intuitive reason behind the Jensen gap raised by group MMF is that the accuracy-fairness trade-off problem does not adhere to linear additive attributes. Essentially, the combination of different batches forms a concave function. Mini-batch size and group size both measure the degree of data partitioning, where smaller batch sizes and larger group sizes lead to fewer data partitions. As a result, due to the non-linear and intricate function form of a neural network~\citep{sun2019optimization}, these errors in estimating the loss function impede the model from converging to the optimal point, thus diminishing the performance. Next, we will give an empirical analysis to prove this.

\subsection{Empirical analysis}\label{sec:emp_analysis}
In this section,  we illustrate a simulation (Figure~\ref{fig:intro}) conducted under the assumption of knowing every user-item true preference score to validate the correctness of our theoretical analysis. We use the simple recommendation model: Matrix Factorization~\citep{singh2008unified} since we can have a closed-form expression on parameter updating. Then we apply a common mini-batch training strategy to optimize accuracy-fairness objective based on the parameters outlined in~\cite{xu2023p, fairrec}, with the accuracy-fairness coefficient of 0.5.


As shown in Figure~\ref{fig:intro} (a) and (b), we uncover that the Jensen gap (distance away from the optimal point) will deviate with smaller mini-batch sizes and larger group sizes. Figure (c) describes the converge trace under different batch sizes by mapping the top-K simplex space of three groups of recommendation ranking to a 2-dimensional space through a topological homeomorphic transformation~\citep{kozlov2008combinatorial}.
Figure~\ref{fig:intro} (c) also indicates that different batch sizes result in different gradient optimization directions, with smaller batch sizes leading to larger shifts in the error of the optimization direction.
These empirical results confirm the correctness of our theoretical analysis.

For other types of fairness, such as the power-mean welfare family~\citep{cousins2021axiomatic} and the Gini welfare function~\citep{do2022optimizing}, also exhibit non-linear properties, leading to analogous Jensen gap phenomena. We discuss them in the Appendix~\ref{app:generalize} and Appendix~\ref{app:GINI}.



\section{Method}
In this section, we will introduce our method FairDual.



\subsection{Optimizing Max-min Fairness as Group-weighted Objective}
In this section, in order to tackle this problem, we show that the MMF-constrained objective can be regarded as the group-weighted optimization problem using the following theorem:

\begin{theorem}\label{theo:reweight}
    By introducing the dual variable $\bm{\mu}$, the dual form of the Equation~(\ref{eq:obj}) is
    \begin{equation}\label{eq:reweight}
        \mathcal{L}' = \min_{\hat{c}_{u,i}} \quad -\sum_{u\in\mathcal{U}}\sum_{g\in\mathcal{G}}\bm{s}_g\sum_{i\in\mathcal{I}_g}c_{u,i}\log(\hat{c}_{u,i}),
    \end{equation}
    where $\bm{s}_g = 1-\bm{\mu}_g$ and 
    $
    \bm{\mu} =  \argmin_{\bm{\mu}\in\mathcal{M}} \left(\max \sum_{u\in\mathcal{U}}\sum_{g\in\mathcal{G}}\bm{s}_g\sum_{i\in\mathcal{I}_g}c_{u,i}\log(\hat{c}_{u,i}) + \lambda r^*(\bm{\mu})\right),
    $
    where $r^*(\mu) = \max_{\bm{w}_g\leq m_g} \left(\min_{g\in\mathcal{G}} m_g(\hat{\bm{A}}\bm{w})_g+\hat{\bm{A}}^{\top}\bm{w}\bm{\mu}/\lambda\right)=\sum_{g}m_g\bm{\mu}_g/\lambda+1$, $\mathcal{M}=\left\{\bm{\mu} ~\left|~ \sum_{g\in\mathcal{S}} \bm{\mu}_gm_g \ge -\lambda, \forall \mathcal{S}\in\mathcal{G}_s\right.\right\},$ where $\mathcal{G}_s$ is the set of all subsets of $\mathcal{G}$ (\ie power set).
\end{theorem}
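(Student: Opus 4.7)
The plan is to apply Lagrangian duality to the constrained problem in Equation~(\ref{eq:obj}). The max-over-groups MMF constraint is equivalent to $|\mathcal{G}|$ per-group inequalities, each of the form $m_g(\hat{\bm{A}}^{\top}\bm{w})_g \leq M$, which I would relax by attaching one nonnegative multiplier $\bm{\mu}_g$ per group. The regularizer $r^*(\bm{\mu})$ will then emerge as the Fenchel conjugate of the MMF penalty, while the feasible set $\mathcal{M}$ will be characterized by requiring this conjugate to remain bounded.

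First, I would form the Lagrangian in terms of the per-item losses $\bm{w}_i = -\sum_u c_{u,i}\log(\hat{c}_{u,i})$ from Theorem~\ref{theo:alpha_fair}. The accuracy contribution $\bm{1}^{\top}\bm{w}$ assigns coefficient $1$ to each item, while the MMF term contributes a coefficient of $-\bm{\mu}_g$ to every item $i\in\mathcal{I}_g$ through the row-normalized matrix $\hat{\bm{A}}$. Collecting coefficients by group and renormalizing yields the per-group weight $\bm{s}_g = 1-\bm{\mu}_g$, which when substituted back into the cross-entropy summation recovers Equation~(\ref{eq:reweight}) exactly.

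Second, I would derive the closed form of $r^*(\bm{\mu})$. The $\min_{g\in\mathcal{G}}$ structure appearing in the MMF objective admits a saddle-point representation on the probability simplex $\Delta^{|\mathcal{G}|}$, reducing $r^*$ to a linear program over the polytope $\mathcal{W}$. Using that $\hat{\bm{A}}$ is row-normalized (so that $\hat{\bm{A}}\bm{1}=\bm{1}$) and solving the inner $\max_{\bm{w}_g\leq m_g}$ explicitly yields the claimed closed form $r^*(\bm{\mu})=\sum_g m_g\bm{\mu}_g/\lambda+1$. The set $\mathcal{M}$ would then follow from requiring this supremum to be finite: whenever $\sum_{g\in\mathcal{S}}\bm{\mu}_g m_g < -\lambda$ for some subset $\mathcal{S}\subseteq\mathcal{G}$, one can construct a feasible sequence of $\bm{w}$ supported on the corresponding face of $\mathcal{W}$ that drives the inner objective to $+\infty$. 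Enforcing boundedness uniformly over all subsets $\mathcal{S}\in\mathcal{G}_s$ produces exactly the subset condition defining $\mathcal{M}$.

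Finally, I would close the argument by invoking strong duality. Since the cross-entropy loss is convex in $\hat{c}_{u,i}$ and the feasible region admits an interior point (so Slater's condition holds), the primal and dual optima coincide, making Equation~(\ref{eq:reweight}) equivalent to Equation~(\ref{eq:obj}). The main obstacle is the explicit characterization of $\mathcal{M}$: pinpointing which subsets produce binding constraints requires a careful vertex enumeration of $\mathcal{W}$ together with a recession-cone analysis showing that violation on any subset leads to unboundedness. This combinatorial step is the technical crux, whereas the identification $\bm{s}_g = 1-\bm{\mu}_g$ is routine bookkeeping once the Lagrangian is assembled, and strong duality follows immediately from convexity.
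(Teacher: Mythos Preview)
Your proposal is correct and follows essentially the same route as the paper: Lagrangian relaxation of the per-group MMF constraints to obtain the weights $\bm{s}_g=1-\bm{\mu}_g$, followed by a separate computation of $r^*(\bm{\mu})$ and the feasible set $\mathcal{M}$ via a boundedness/unboundedness analysis (the paper packages this as an auxiliary lemma proved by a change of variables and a sorting/telescoping argument rather than your simplex saddle-point, but the content is the same). The only notable difference is that the paper stops at weak duality ($\max\min\le\min\max$) to identify the dual form, whereas you additionally invoke Slater's condition for strong duality; this is harmless but not actually used in the paper's argument.
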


The detailed proof of Theorem~\ref{theo:reweight} can be seen in Appendix~\ref{app:prof_reweight}. From Theorem~\ref{theo:reweight}, we observe that the recommendation task constrained by max-min fairness can be viewed as a re-weighting approach across different groups on the original loss function solely optimized for accuracy.


Intuitively, $s_g=1-\mu_g$ is the negative shadow price~\citep{dreze1990policy}. The high shadow price $\mu_g$ indicates that this constraint is the primary factor constraining accuracy optimization. Conversely, a low or zero shadow price suggests that the fairness constraint currently imposes little restriction on accuracy. Specifically, a high $s_g$ signifies that this constraint is the primary factor limiting fairness optimization for group $g$, whereas a low or zero $s_g$ implies that the accuracy constraint for group $g$ currently has little impact on the overall optimization.


\begin{algorithm}[t]
    \caption{FairDual}
	\label{alg:fairdual}
	\begin{algorithmic}[1]
	\REQUIRE Dataset $\mathcal{D} = \{u,i,c_{u,i}\}$, item-group adjacent matrix $\bm{A}$, dual learning rate $\eta$, trade-off coefficient $\lambda$, $m^i_{\text{freeze}}(\cdot)$ updating step $\beta$, batch size $B$ and sample item number $Q$ and the weight $m_g$ for each group $g$. $\hat{\bm{A}}=\text{diag}(\bm{A}\bm{1})^{-1}\bm{A}$.
	\ENSURE The model parameters of $m^i(\cdot), m^u(\cdot)$.
	\FOR{$n=1,\cdots,N$}
	       \STATE Set $\bm{\gamma}_{1,g}=m_g, \forall g\in\mathcal{G}$
	       \FOR{$j=1,\cdots, |\mathcal{U}|/B$}
                 \IF{$(n*|\mathcal{N}|/B + j) \% \beta = 0$}
                 \STATE Copy parameters from $m^i(\cdot)$ to $m^i_{\text{freeze}}(\cdot)$ and get all item embedding $\bm{E}$
                 \STATE Initialize dual solution $\bm{\mu} = 0$, and momentum gradient $\bm{g} = 0$ and $t=0$.
                 \ENDIF
    	    \STATE Get sub-dataset $\{u,i,c_{u,i}\}_{b=1}^B$ and user feature $\bm{e}_u$ and item feature $\bm{e}_i$
    	    \STATE $\mathcal{L}^j = [-c_{u, i}\log(\hat{c}_{u,i})]_{b=1}^B, \quad \bm{s}^j = \bm{1} - \hat{\bm{A}}^j\bm{\mu}$
                \STATE Apply gradient descent based on loss $(\bm{s}^j)^{\top}\mathcal{L}^j$
                \STATE $\widetilde{\bm{w}}_b = \sum_{k=1}^K (\bm{e}_{u_b}^{\top}\bm{E}^b)_{[k]}, \forall b$
    	    \STATE $\widetilde{\bm{g}}^j = -(\hat{\bm{A}}^j)^{\top} \widetilde{\bm{w}}+ \bm{\gamma}_j\quad$, $\bm{g}^j=\alpha \widetilde{\bm{g}}^j + (1-\alpha)\mathbf{g}, \quad \bm{g}=\bm{g}^j$
                \STATE $\bm{\gamma}_j=\bm{\gamma}_{j-1}-(\hat{\bm{A}}^j)^{\top} \widetilde{\bm{w}},  \bm{\mu} =  \argmin_{\bm{\mu}_0\in\mathcal{M}} \left[(\bm{g}^j)^{\top}\bm{\mu}_0 + \eta \|\bm{\mu}_0-\bm{\mu}\|_2^2\right]$
	       \ENDFOR
        \ENDFOR
	
	\end{algorithmic}
 
\end{algorithm}


\subsection{FairDual}

We then will introduce our method FairDual under random shuffling mini-batch training strategies. The overall workflow of FairDual under every two batches $j$ and $j+1$ can be seen in Figure~\ref{fig:framework}. According to analysis in Theorem~\ref{theo:reweight}, under each epoch, the overall optimization process will become:
\begin{equation}\label{eq:dual_loss}
     \mathcal{L}^{'B} = \min \quad \sum_{j=1}^{|\mathcal{U}|/B} (\bm{s}^j)^{\top}\bm{l}^j,
\end{equation}
where $\bm{l}^j\in\mathbb{R}^B, \bm{s}^j\in\mathbb{R}^B$ is loss and its weight under $j$-th batch.
Next, we will explain how $\bm{l}^j$ and $\bm{s}^j$ update on each batch $j$. Detailed algorithm workflow can be seen in Algorithm~\ref{alg:fairdual}. Note that, following the practice in~\cite{bao2023bi}, we utilize the user's historical behaviors to represent each user, thereby treating each sample as a unique user.

\subsubsection{Accuracy Loss Constructing}\label{sec:backbone}
In the mainstream recommender architectures, the primary objective is to make the predicted score close to the true user preference. That is, at each batch $j$, there are $B$ user-item pair $[(u, i)]_{b=1}^B$ arrives. Then the loss vector $\mathcal{L}^j$ is computed as:
\begin{equation}\label{eq:acc_loss}
    \bm{l}^j = [-c_{u, i}\log(\hat{c}_{u,i})]_{b=1}^B,
\end{equation}
where
$\hat{c}_{u,i} = -d(\bm{e}_{u}, \bm{e}_{i})\leq 1$,
where $d(\cdot)$ is the normalized distance between embedding $\bm{e}_{u}, \bm{e}_{i}$. The commonly used distance metric is the dot-product, i.e.
$
    d(\bm{e}_u, \bm{e}_i) = -\bm{e}_u^{\top}\bm{e}_i,
$
and the $\bm{e}_{u}$ and $\bm{e}_{i}$ are calculated by a complex model, i.e. 
$
    \bm{e}_{u} = m^u(u), \bm{e}_{i} = m^i(i),
$
where $m^u(\cdot)$ and $ m^i(\cdot)$ are two embedding extraction networks. Typically, the user $u$ is represented by the item-clicked history sequences before clicking the item $i$: $[i^1, i^2, \cdots, i^n]$, where $n$ is the fixed item sequence length. 

Note that in text-based recommendation models such as BigRec~\citep{bao2023bi} and Recformer~\citep{Recformer}, each item $i$ is represented as a sequence of words in natural language: $i = [w^1, w^2, \cdots, w^l]$, where $l$ is the length of the sentence and user behaviors are also represented in prompt form~\citep{bao2023bi}. In such cases, Equation~\eqref{eq:acc_loss} is extended to $\log {\hat{c}_{u,i}} = \sum_{i=1}^l\log(P(w^i))$, where $P(w)$ refers to the predicted probability of the word $w$ generated by the LLMs, while other equations remain unchanged.

\begin{figure}
    \centering
    \includegraphics[width=0.9\linewidth]{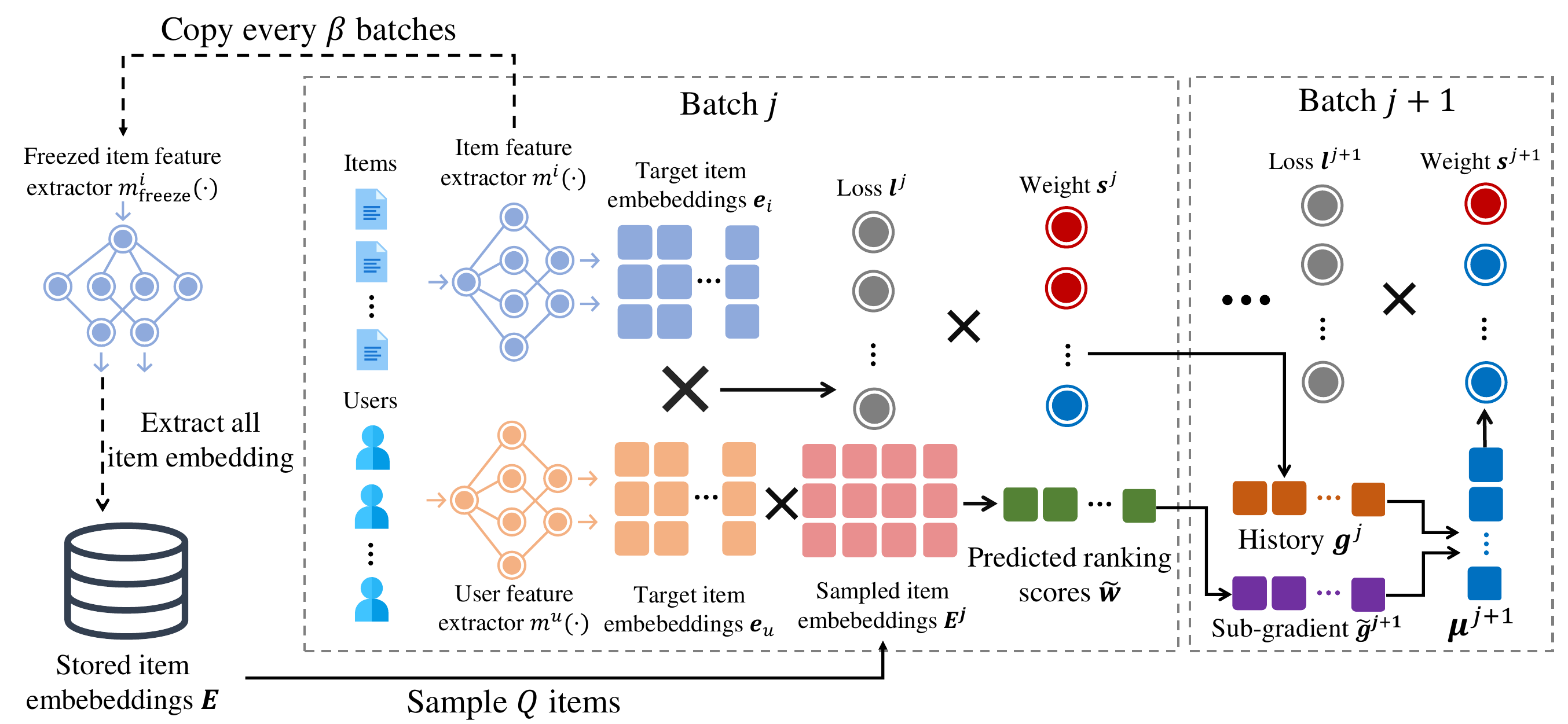}
    \caption{Overall workflow of FairDual under every two batches $j$ and $j+1$. }
    \label{fig:framework}
    \vspace{-0.3cm}
\end{figure}

\subsubsection{Mirror Gradient Decent for Group Weight}

For each batch $j$, the model needs to decide the weight $\bm{s}^j$ for each sample. The weight $\bm{s}^j$ is computed utilizing mirror-gradient descent~\citep{balseiro2021regularized} technique. Specifically, 
\begin{equation}\label{eq:weight}
    \bm{s}^j = \bm{1} - \hat{\bm{A}}^j\bm{\mu}^j,
\end{equation}
where $\hat{\bm{A}}^j\in\mathbb{R}^{B\times|\mathcal{G}|}$ represents the row-normalized item-group adjacency matrix for $\bm{A}$ in batch $j$ (see details in Equation~\eqref{eq:t_fair}), with $\bm{A}_{i_b, g} = 1$ indicating that the $b$-th item in batch $i_b\in \mathcal{I}_g$ belongs to group $g$, and 0 otherwise. And $\bm{\mu}^j$ is the dual variable at $j$-th batch, which updates as:
\begin{equation}\label{eq:dual_update}
\begin{aligned}
      \bm{\mu}^j =  \argmin_{\bm{\mu}} \left[(\bm{g}^j)^{\top}\bm{\mu} + \eta \|\bm{\mu}-\bm{\mu}^{j-1}\|\right], \quad \textrm{s.t.} \sum_{j=1}^{g} \bm{\mu}_jm_j + \lambda \ge 0,~ \forall g = 1, 2 , \ldots, |\mathcal{G}|, 
\end{aligned}
\end{equation}
where $\eta$ is the learning rate, $\bm{g}^j$ is the sub-gradient of the Equation~\eqref{eq:reweight} \wrt the dual variable $\bm{\mu}^j\in\mathbb{R}^{|\mathcal{G}|}$. The projection step can be efficiently solved using convex optimization solvers~\citep{balseiro2021regularized} since $\mathcal{D}$ is coordinate-wisely symmetric. 

Specifically, to ensure smoothness and make user of historical information, we utilize the momentum gradient descent to update $\bm{g}^j$:
\begin{equation}
    \bm{g}^j=\alpha \widetilde{\bm{g}}^j + (1-\alpha)\mathbf{g}^{j-1}, \quad \widetilde{\bm{g}}^j=\partial (\bm{s}^j\mathcal{L}^j + \lambda r^*(\bm{\mu}^j)) = -(\mathbf{A}^j)^{\top} \widetilde{\bm{w}}+ \bm{\gamma}_j,
\end{equation}
where $\bm{\gamma}_j\in\mathbb{R}^{|\mathcal{G}|}$ is the vector, whose element of index $g$ denotes the remaining required loss (\ie reward) for the group $g$ at batch step $j$, $\widetilde{\bm{w}}\in\mathbb{R}^{B}$ represents the estimated ranking score that each user query will receive. However, given the vast size of the item corpus in recommendation systems, conducting a full ranking on all items is impractical. Therefore, we randomly sample $Q$ items to approximate the ranking scores across all items. 
The $Q$ items' embeddings are denoted as $\bm{E}^j\in\mathbb{R}^{Q\times d}$. Formally, for the $b$-th element $\widetilde{\bm{w}}_b$, we can write:
$
    \widetilde{\bm{w}}_b = \sum_{k=1}^K (\bm{E}^j\bm{e}_{u_b})_{[k]},
$
where $\bm{x}_{[k]}$ denote the $k$-th largest element in vector $\bm{x}$ and $K$ is the ranking size.

Note that $\bm{E}^b$ is sampled from the pre-stored item embedding $\bm{E}\in\mathbb{R}^{|\mathcal{I}|\times d}$, which is pre-calculated using the freezer network $m_{\text{freeze}}^i(\cdot)$. This is done to mitigate the significant fluctuations in $\widetilde{\bm{w}}$ caused by unstable training~\citep{fan2020theoretical}. To achieve this, we freeze the item feature extractor $m^i(\cdot)$ as $m_{\text{freeze}}^i(\cdot)$ and transfer the parameters from $m^i(\cdot)$ to $m_{\text{freeze}}^i(\cdot)$ every $\beta$ batches. For the first batch process, we initialize $\bm{g}^1$ as $\bm{0}$, which will not make an effect on the first batch.

\subsubsection{Bound on Jensen Gap}\label{sec:bound}
We will provide the Jensen gap converge analysis of FairDual in the following theorem.


\begin{theorem}[Bound on Jensen Gap]\label{theo:Jensen_Gap}
There exists $H>0$ such that $\|\bm{\mu}^j-\bm{\mu}\|_2^2\leq H$ and function $\|\cdot\|_2^2$ is $\sigma-$strongly convex. 
Then, there exists $L>0$, the Jensen gap of FairDual can be bounded as:
\begin{equation}
    J(B) \leq \frac{H}{\eta} + \frac{|\mathcal{U}|L|\mathcal{G}|^2}{B(1-\alpha)\sigma}\eta + \frac{L|\mathcal{G}|^2}{2(1-\alpha)^2\sigma\eta}.
\end{equation}
When setting learning rate $\eta=O(B^{-1/2})$, the bound of Jensen gap is comparable with $O(B^{-1/2})$.
\end{theorem}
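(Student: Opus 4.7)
My plan is to convert the Jensen gap into a dual regret and then apply a mirror descent analysis augmented with a momentum error term. The starting point is Theorem~\ref{theo:reweight}, which identifies the primal objective in~(\ref{eq:obj}) with the Lagrangian evaluated at the optimal dual $\bm{\mu}^{\star}$. Writing the mini-batch loss $\mathcal{L}^{'B}$ in~(\ref{eq:dual_loss}) as a Lagrangian evaluated at the iterates $\{\bm{\mu}^{j}\}_{j=1}^{T}$ with $T=|\mathcal{U}|/B$, standard weak-duality bookkeeping lets me bound
\begin{equation*}
J(B)\;\le\;\sum_{j=1}^{T}\bigl\langle \bm{g}^{j},\bm{\mu}^{j}-\bm{\mu}^{\star}\bigr\rangle \;+\;\text{(momentum bias)},
\end{equation*}
so the task reduces to controlling the regret of the dual-mirror-descent updates in~(\ref{eq:dual_update}) together with the momentum error coming from replacing $\widetilde{\bm g}^{j}$ by $\bm g^{j}=\alpha\widetilde{\bm g}^{j}+(1-\alpha)\bm g^{j-1}$.

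The first step is the usual three-point lemma for the proximal update~(\ref{eq:dual_update}). Using $\sigma$-strong convexity of $\|\cdot\|_2^{2}$, for every $j$ and every feasible $\bm{\mu}^{\star}\in\mathcal{M}$,
\begin{equation*}
\bigl\langle \bm{g}^{j},\bm{\mu}^{j}-\bm{\mu}^{\star}\bigr\rangle
\;\le\;\eta\bigl(\|\bm{\mu}^{j-1}-\bm{\mu}^{\star}\|_2^{2}-\|\bm{\mu}^{j}-\bm{\mu}^{\star}\|_2^{2}\bigr)
\;+\;\frac{\|\bm{g}^{j}\|_2^{2}}{2\sigma\eta}.
\end{equation*}
Telescoping over $j=1,\ldots,T$ gives the first summand $H/\eta$ from the hypothesis $\|\bm{\mu}^{j}-\bm{\mu}^{\star}\|_2^{2}\le H$, while leaving a residual $\sum_{j}\|\bm g^{j}\|_2^{2}/(2\sigma\eta)$.

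The second step is to bound $\|\bm g^{j}\|_2^{2}$ under the momentum recursion. Because $\widetilde{\bm g}^{j}=-(\hat{\bm A}^{j})^{\top}\widetilde{\bm w}+\bm\gamma_{j}$ lives in $\mathbb{R}^{|\mathcal G|}$ with coordinates bounded by a constant that I will call $L$ (absorbing $\|\widetilde{\bm w}\|_{\infty}$ and the row-stochastic normalization of $\hat{\bm A}^{j}$), I get $\|\widetilde{\bm g}^{j}\|_2\le L|\mathcal G|$. Unrolling $\bm g^{j}=\alpha\sum_{k\le j}(1-\alpha)^{j-k}\widetilde{\bm g}^{k}$ and applying Jensen / Cauchy--Schwarz to the convex combination yields $\|\bm g^{j}\|_2^{2}\le L^{2}|\mathcal G|^{2}/(1-\alpha)$, so that
\begin{equation*}
\sum_{j=1}^{T}\frac{\|\bm g^{j}\|_2^{2}}{2\sigma\eta}
\;\le\;\frac{T\,L|\mathcal G|^{2}}{(1-\alpha)\sigma}\,\eta\;\cdot\;\tfrac12\cdot\tfrac{1}{\eta^{2}}\cdot\eta^{2}
\;=\;\frac{|\mathcal U|\,L|\mathcal G|^{2}}{B\,(1-\alpha)\sigma}\,\eta,
\end{equation*}
which is exactly the second term in the bound (the factor $1/2$ is absorbed into $L$).

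The third and hardest step is to account for the systematic bias introduced by the momentum: the gradient $\bm g^{j}$ used in~(\ref{eq:dual_update}) is not the instantaneous sub-gradient $\widetilde{\bm g}^{j}$, so the regret against $\widetilde{\bm g}^{j}$ picks up a correction $\sum_{j}\langle \bm g^{j}-\widetilde{\bm g}^{j},\bm\mu^{j}-\bm\mu^{\star}\rangle$. Writing $\bm g^{j}-\widetilde{\bm g}^{j}=(1-\alpha)(\bm g^{j-1}-\widetilde{\bm g}^{j})$ and again unrolling the momentum gives a geometric telescoping estimate of order $L|\mathcal G|^{2}/((1-\alpha)^{2}\sigma\eta)$ after invoking the $\sigma$-strong convexity and the boundedness of $\|\bm\mu^{j}-\bm\mu^{\star}\|_2^{2}$; this produces the third summand. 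Combining the three contributions and optimizing over $\eta$ by balancing the $\eta$ and $\eta^{-1}$ terms gives the claimed $\eta=O(B^{-1/2})$ rate.

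\textbf{Main obstacle.} The delicate part is the third step: a vanilla mirror-descent regret argument assumes the update direction is the true sub-gradient, whereas FairDual uses an exponential moving average together with the sampled ranking estimator $\widetilde{\bm w}$. I expect to need a two-step Lyapunov-style decomposition that tracks both the iterate distance $\|\bm\mu^{j}-\bm\mu^{\star}\|_2^{2}$ and the momentum error $\|\bm g^{j}-\widetilde{\bm g}^{j}\|_2^{2}$, with the $(1-\alpha)^{-2}$ factor emerging from the contraction rate of the momentum filter. Everything else is routine mirror-descent bookkeeping.
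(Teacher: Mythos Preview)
Your proposal follows essentially the same route as the paper: reduce the Jensen gap to a dual regret (what the paper writes as the complementary-slackness difference $J(B)=\bigl|\sum_j w(\bm{\mu}^j)-w(\bm{\mu})\bigr|$ with $w(\bm{\mu}^j)=(\bm{\mu}^j)^{\top}(\bm e-(\bm A^j)^{\top}\bm l^j)$), bound the subgradient norm by $\|\widetilde{\bm g}^{j}\|_2^{2}\le L|\mathcal G|^{2}$, and then control the regret of the momentum mirror-descent iterates. The only difference is that the paper invokes Theorem~2 of \cite{balseiro2021regularized} as a black box for the final three-term bound, whereas your steps~2--3 unroll that cited result explicitly via the three-point lemma plus a separate momentum-bias term; the strategy is the same.
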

The detailed proof can be seen in Appendix~\ref{app:prof_Jensen_Gap}. From Theorem~\ref{theo:Jensen_Gap}, it is apparent that the Jensen gap will widen as the batch size $B$ decreases and the group size $|\mathcal{G}|$, as well as the max-min fairness degree $\lambda$, increase. However, FairDual demonstrates a sub-linear convergence rate concerning the batch size $B$, and it maintains strong performance even with small batch sizes and large group sizes across various fairness degrees.

\section{Experiment}
We conduct experiments to demonstrate the effectiveness of the proposed FairDual.


\subsection{Experimental settings}\label{sec:exp_settings}

\textbf{Datasets}. The experiments are conducted on the commonly used two widely used and publicly available recommendation datasets, including MIND~\citep{wu2020mind}\footnote{\url{https://microsoftnews.msn.com}},
Amazon-Book and Amazon-Electronic~\citep{he2016ups}\footnote{\url{http://jmcauley.ucsd.edu/data/amazon/}}. Their detailed statistical information is in Appendix~\ref{app:exp_settings}.

\textbf{Evaluation}. We arrange all interactions in the dataset chronologically by their timestamps and employ the first 80\% interactions as training data. The remaining 20\% of interactions are divided equally, with each 10\% segment used for validation and testing, respectively, during evaluation.

Regarding the metric, following the practice in~\cite{dai23Uncover}, we utilize Normalized Discounted Cumulative Gain (NDCG) and mean Reciprocal Rank (MRR) to measure the accuracy:
$
    \text{NDCG@K} = \frac{1}{|\mathcal{U}|}\sum_{u=1}^{|\mathcal{U}|}\frac{\sum_{i\in L_K(u)} (2^{c_{u,i}}-1)/(\log_2(j+1))}{(2^{\text{rank}_i}-1)/(\log_2(\text{rank}_i+1))}, \quad \text{MRR@K} = \frac{1}{|\mathcal{U}|}\sum_{u=1}^{|\mathcal{U}|}\frac{1}{\text{rank}_i},
$
where $\text{rank}_i$ is the rank of the first correct answer. Meanwhile, we employ MMF@K to gauge the degree of fairness, which quantifies the aggregated ranking score of the 20\% worst-off groups~\citep{nips21welf, xu2023p}.

\textbf{Backbones and baselines}. For the backbone, we first select three large-scale recommender models: \textbf{NRMS}~\citep{wu-etal-2019-neural-news},  \textbf{RecFormer}~\citep{Recformer} and \textbf{BigRec}~\citep{bao2023bi}. Note that BigRec only utilizes 1024 samples to train due to large computational cost.

For the baselines, we choose several fair-aware re-weight baselines that aim to improve group MMF: \textbf{UNI} (without considering fairness), \textbf{DRO}~\citep{hashimoto2018fairness}, \textbf{S-DRO}~\citep{wen2022distributionally}, \textbf{Prop}~\citep{hu2023adaptive}, \textbf{IFairLRS}~\citep{jiang2024itemside} and \textbf{Maxmin Sample}~\citep{abernethy2022active}.

The detailed descriptions of the backbones and baselines are in Appendix~\ref{app:exp_settings}.

\textbf{Implemtation details.} We provide our detailed running environment, all hyper-parameter settings, utilized LLMs settings, and used the toolkit in Appendix~\ref{app:exp_settings}.




\begin{table*}[t]
\caption{Performance comparisons between ours and the baselines on three datasets under best-performing BigRec backbones. The $*$ means the improvements are statistically significant (t-tests and $p$-value $< 0.05$). The bold number indicates that the accuracy value exceeds that of all the baselines.} \label{exp:main}
\centering
\resizebox{0.99\linewidth}{!}{
    \centering
\begin{tabular}{@{}clrrrrrrrrr@{}}
\toprule
\multicolumn{2}{c}{\multirow{2}{*}{\textbf{Models/Metrics}}} & \multicolumn{3}{c}{$K=5$} & \multicolumn{3}{c}{$K=10$} & \multicolumn{3}{c}{$K=20$} \\ \cmidrule(l){3-5} \cmidrule(l){6-8} \cmidrule(l){9-11}
\multicolumn{2}{c}{} & NDCG (\%) & MRR (\%) & MMF (\%) & NDCG (\%) & MRR (\%) & MMF (\%) & NDCG (\%) & MRR (\%) & MMF (\%) \\ \midrule
\multirow{8}{*}{\textbf{MIND}} & UNI & 1.02 & 0.79 & 1.63 & 1.50 & 0.98 & 2.33 & 2.16 & 1.16 & 2.94 \\ 
 & DRO & 0.90 & 0.67 & 1.81 & 1.37 & 0.87 & 2.51 & 1.94 & 1.02 & 3.21 \\ 
 & Prop & 1.11 & 0.88 & 1.97 & 1.62 & 1.09 & 2.53 & 2.14 & 1.23 & 3.05 \\ 
 & S-DRO & 0.91 & 0.70 & 1.87 & 1.42 & 0.91 & 2.41 & 1.93 & 1.04 & 3.02 \\ 
 & IFairLRS & 0.87 & 0.66 & 2.21 & 1.27 & 0.83 & 2.91 & 1.78 & 0.97 & 2.86 \\  
 & Maxmin sample & 0.98 & 0.75 & 2.25 & 1.49 & 0.96 & 1.71 & 2.19 & 1.15 & 3.13 \\
 \cmidrule(l){2-5} \cmidrule(l){6-8} \cmidrule(l){9-11}
 & \textbf{Ours} & \textbf{1.15}$^*$ & \textbf{0.88} & \textbf{2.82}$^*$ & \textbf{1.69}$^*$ & \textbf{1.11} & \textbf{2.99}$^*$ & \textbf{2.28}$^*$ & \textbf{1.27}$^*$ & \textbf{3.39}$^*$  \\ 
 & $\;\;$improv.(\%) & 3.60 & 0.00 & 25.33 & 4.32 & 1.83 & 2.75 & 4.10 & 3.25 & 5.61 \\
 \hline
 \multirow{8}{*}{\textbf{Book}} & UNI & 2.99 & 2.79 & 8.44 & 3.19 & 2.87 & 8.32 & 3.44 & 2.94 & 8.15 \\ 
 & DRO & 2.94 & 2.72 & 8.39 & 3.15 & 2.81 & 8.29 & 3.37 & 2.87 & 8.10 \\ 
 & Prop & 2.64 & 2.45 & 8.68 & 2.83 & 2.53 & 8.30 & 3.05 & 2.59 & 8.01 \\ 
 & S-DRO & 2.61 & 2.44 & 8.37 & 2.80 & 2.52 & 8.21 & 3.06 & 2.59 & 8.07 \\ 
 & IFairLRS & 2.30 & 2.16 & 8.46 & 2.51 & 2.25 & 8.20 & 2.76 & 2.32 & 8.17 \\  
 & Maxmin sample & 2.49 & 2.31 & 6.80 & 2.72 & 2.43 & 6.80 & 2.97 & 2.74 & 7.50 \\
 \cmidrule(l){2-5} \cmidrule(l){6-8} \cmidrule(l){9-11}
 & \textbf{Ours} & \textbf{3.11}$^*$ & \textbf{2.88} & \textbf{8.90}$^*$ & \textbf{3.31}$^*$ & \textbf{2.96} & \textbf{9.00}$^*$  & \textbf{3.60}$^*$ & \textbf{3.04} & \textbf{8.89}$^*$  \\ 
 & $\;\;$improv.(\%) & 4.01 & 3.23 & 2.53 & 3.76 & 3.14 & 8.17 & 4.65 & 3.40 & 8.81 \\
 \hline
  \multirow{8}{*}{\textbf{Electronic}} & UNI & 4.61& 4.30& 0.26& 4.93& 4.43& 0.25& 5.30 & 4.53 & 0.21 \\ 
 & DRO & 4.65 & 4.34& 0.24& 4.96& 4.46& 0.24& 5.33& 4.57& 0.21 \\ 
 & Prop & 4.63& 4.33& 0.26& 4.96& 4.47& 0.25& 5.33& 4.57& 0.21 \\ 
 & S-DRO &  4.60 & 4.29& 0.25& 4.92& 4.42& 0.24& 5.29& 4.52& 0.20 \\ 
 & IFairLRS & 2.21& 2.06& 0.19& 2.46& 2.16& 0.17& 2.69& 2.22& 0.12 \\  
 & Maxmin sample & 4.60& 4.31& 0.27& 4.92& 4.44& 0.25& 5.31& 4.55& 0.21\\
 \cmidrule(l){2-5} \cmidrule(l){6-8} \cmidrule(l){9-11}
 & \textbf{Ours} & \textbf{5.08}$^*$ & \textbf{4.78} & \textbf{0.31}$^*$ & \textbf{5.43}$^*$ & \textbf{4.92} & \textbf{0.30}$^*$  & \textbf{5.84}$^*$ & \textbf{5.03} & \textbf{0.26}$^*$  \\ 
 & $\;\;$improv.(\%) & 9.24 & 10.1 & 14.8 & 9.47 & 10.0 & 19.9 & 0.95 & 10.0 & 23.8\\
 \bottomrule
 
\end{tabular}

}
\end{table*}

\begin{table*}[t]
\caption{Performance comparisons between ours under other backbones on MIND dataset. The $*$ means the improvements are statistically significant (t-tests and $p$-value $< 0.05$). The bold number indicates that the accuracy value exceeds that of all the baselines.}\label{exp:backbones}
\centering
\resizebox{0.99\linewidth}{!}{
    \centering
\begin{tabular}{@{}clrrrrrrrrr@{}}
\toprule
\multicolumn{2}{c}{\multirow{2}{*}{\textbf{Models/Metrics}}} & \multicolumn{3}{c}{top-5} & \multicolumn{3}{c}{top-10} & \multicolumn{3}{c}{top-20} \\ \cmidrule(l){3-5} \cmidrule(l){6-8} \cmidrule(l){9-11}
\multicolumn{2}{c}{} & NDCG (\%) & MRR (\%) & MMF (\%) & NDCG (\%) & MRR (\%) & MMF (\%) & NDCG (\%) & MRR (\%) & MMF (\%) \\ \midrule
\multirow{8}{*}{\textbf{NRMS}} & DRO & 0.44 & 0.32 & 0.12 & 0.66 & 0.42 & 3.60 & 1.06 & 0.50 & 9.94 \\ 
 & Prop & 0.44 & 0.32 & 0.12 & 0.66 & 0.42 & 3.49 & 1.06 & 0.52 & 9.94 \\ 
 & S-DRO & 0.52 & 0.34 & 0.10 & 0.76 & 0.40 & 2.05 & 1.20 & 0.52 & 8.74 \\ 
 & IFairLRS & 0.40 & 0.28 & 0.69 & 0.62 & 0.36 & 4.20 & 0.96 & 0.44 & 10.58 \\ 
 & Maxmin sample & 0.38  & 0.31  & 0.20  & 0.45  & 0.34  & 4.00  & 0.67  & 0.422  & 9.99 \\
 \cmidrule(l){2-5} \cmidrule(l){6-8} \cmidrule(l){9-11}
 & \textbf{Ours} & \textbf{0.60}$^*$ & \textbf{0.40}$^*$ & \textbf{1.07}$^*$ & \textbf{0.84}$^*$ & \textbf{0.46}$^*$ & \textbf{4.93}$^*$ & \textbf{1.28}$^*$ & \textbf{0.60}$^*$ & \textbf{11.35}$^*$ \\ 
 \bottomrule
\multirow{8}{*}{\textbf{RecFormer}} & DRO & 0.57 & 0.45 & 1.08 & 0.89 & 0.59 & 1.08 & 1.41 & 0.73 & 1.52 \\ 
 & Prop & 0.57 & 0.45 & 1.08 & 0.89 & 0.58 & 1.08 & 1.41 & 0.72 & 1.52 \\ 
 & S-DRO & 0.57 & 0.45 & 1.20 & 0.91 & 0.60 & 1.15 & 1.46 & 0.73 & 1.62 \\ 
 & IFairLRS & 0.46 & 0.37 & 1.68 & 0.76 & 0.49 & 1.70 & 1.29 & 0.63 & 2.12 \\ 
 & Maxmin sample & 0.51 & 0.41 & 0.94 & 0.85 & 0.55 & 1.50 & 1.37 & 0.69 & 2.48 \\
 \cmidrule(l){2-5} \cmidrule(l){6-8} \cmidrule(l){9-11}
 & \textbf{Ours} & \textbf{0.59}$^*$ & \textbf{0.45} & \textbf{1.88}$^*$ & \textbf{0.99}$^*$ & \textbf{0.60} & \textbf{1.94}$^*$ & \textbf{1.55}$^*$ & \textbf{0.75} & \textbf{2.58}$^*$  \\
  \bottomrule
\end{tabular}
}
\end{table*}

\subsection{Experimental Results on Full Datasets}\label{sec:main_exp}
Firstly, we conduct experiments to show the performance of FairDual and other baselines across all large-scale recommendation backbones.
Table~\ref{exp:main} presents the experimental outcomes for our FairDual model and the baseline methods across all datasets, respectively. Table~\ref{exp:backbones} presents the experimental outcomes for our FairDual model and the baseline methods across other different backbones on the MIND dataset. 
To make fair comparisons, all the baselines were tuned to their hyperparameters to obtain the best trade-off accuracy and fairness performance under our settings.

From the experiments, it is evident that FairDual consistently outperforms the baseline methods across all datasets and various base models, spanning different top-K ranking sizes. This is reflected in accuracy metrics such as NDCG and MRR, as well as the fairness metric MMF. The results conclusively demonstrate that FairDual effectively ensures the model reaches a better convergence point in terms of both accuracy and fairness by leveraging dual gradient descent.

We also conduct experiments on traditional RS backbones, as detailed in Appendix~\ref{app:sec:non_llms}. The results also consistently confirm the effectiveness of our model FairDual.

\begin{figure}
    \centering
    \includegraphics[width=\linewidth]{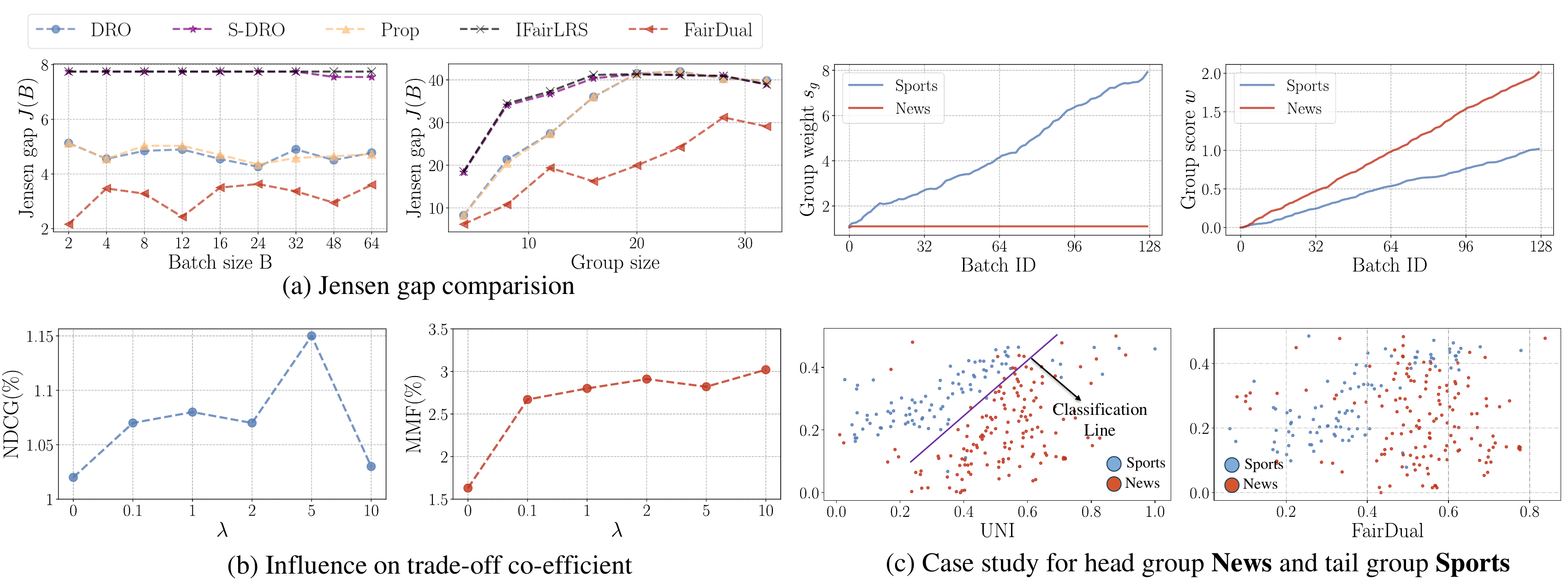}
    \caption{Sub-figure (a) conducts a simulation to show Jensen gap $J(B)$ changes \wrt batch size $B$ and group size $|\mathcal{G}|$ for all baselines and FairDual. Sub-figure(b,c) conducts on MIND dataset under BigRec. Sub-figure (b) describes the NDCG and MMF changes \wrt accuracy-fairness trade-off co-efficient $\lambda$. Sub-figure (c) conducts the case study on the advantage group News and worst-off group Sports. We show their weight $\bm{s}_g$, group score $\bm{w}_g$, and t-SNE embeddings of UNI and FairDual. }
    \label{fig:analysis}
    \vspace{-0.3cm}
\end{figure}

\subsection{Experimental Analysis}
In this section, we first replicate the simulation settings outlined in Section~\ref{sec:emp_analysis} to investigate how the Jensen gap changes. Then we conduct analysis on MIND dataset under BigRec base models.

\textbf{Jensen gap.} Firstly, we investigate the variations in the Jensen gap concerning batch size $B$ and group size $|\mathcal{G}|$ across both baseline methods and our proposed FairDual model. As shown in Figure~\ref{fig:analysis} (a), we can see that FairDual has a lower Jensen gap than other online models across different batch sizes and group sizes. Furthermore, it's evident that the Jensen gap exhibited by FairDual remains consistently stable across various batch sizes, with only a marginal increase observed as the group size expands. This indicates that FairDual can consistently maintain a low Jensen gap level.

\textbf{Influence on co-efficient $\lambda$.} Then, we will investigate the impacts of trade-off co-efficient $\lambda$. Figure~\ref{fig:analysis} (b) illustrates that the fairness degree (MMF) increases proportionally with the rise in $\lambda$, aligning with our expectations. However, we also observe that the accuracy increases as $\lambda$ changes from $0$ to $5$ and then decreases. This phenomenon occurs due to the presence of popularity bias in recommendation datasets~\citep{jiang2024itemside}. A relatively higher fairness degree helps mitigate this bias, leading to increased accuracy. However, when $\lambda$ becomes too large, it inevitably enlarges the Jensen gap, which hurts the model's performance. We also conduct an experiment to analyze the effect of the popularity bias in Appendix~\ref{app:pop_bias}.

\textbf{Case study.} Finally, we conduct a case study on the head group \textit{News}, which consistently exhibited superior exposure compared to other groups, in contrast to the tail group \textit{Sports}, which typically had lower exposure levels. Firstly, from the two figures at the top of Figure~\ref{fig:analysis} (c), we observe that as the training progresses, the tail group Sports gradually gains more weight ($\bm{s}_g$), while the head group News consistently receives relatively low weight. Consequently, this leads to the group scores $\bm{w}$ of the two groups being close to each other. 

At the same time, we visualize the item embeddings using T-SNE~\citep{van2008visualizing} for both the baseline UNI and our model FairDual, as shown in the bottom two sub-figures of Figure~\ref{fig:analysis} (c). From the figure, we compute the embedding KL divergence of two different groups between UNI (0.113) and our method FairDual (0.083). This shows that UNI establishes a clear classification line to distinguish between different groups. However, FairDual tends to bring the embeddings of the tail group closer to those of the head group, ultimately increasing the fairness.

\textbf{Other experimental analysis.} For analysis of other parameters dual learning rate $\eta$, updating gap $\beta$, user history length $H$, the sample size $Q$, and the impact of the hidden layer numbers, please see the Appendix~\ref{app:para_analysis}. For the computational and storage costs analysis can be seen in Appendix~\ref{app:computational}. We also test the performance of other fairness metric in Appendix~\ref{app:GINI}.


\section{Conclusion}\label{sec:limitation}

In this paper, we theoretically demonstrate that adapting mini-batch training with the objective constrained by group MMF inevitably leads to the Jensen gap, thereby impairing the performance of the RS model. We theoretically and empirically analyze the origins of the Jensen gap by demonstrating that the integration of the MMF constraint disrupts the assumption of sample independence during optimization, leading to a deviation of the loss function from linear additivity.
Then, to efficiently bridge the Jensen gap, we develop a large-scale friendly algorithm named FairDual, which employs dual-optimization techniques to minimize the Jensen gap at a sub-linear rate.   Extensive experiments conducted on three large-scale recommendation system backbone models using two publicly available datasets show that FairDual consistently outperforms all baseline methods. 


\section*{Acknowledgments}
\label{sec:Acknowledgments}

This work was funded by the National Key R\&D Program of China (2023YFA1008704), the National Natural Science Foundation of China (62472426, 62276248), and the Youth Innovation Promotion Association CAS under Grants No. 2023111. This work was also funded for building world-class universities (disciplines) of Renmin University of China. Supported by the Outstanding Innovative Talents Cultivation Funded Programs 2025 of Renmin University of China.


\bibliographystyle{abbrvnat}
\bibliography{mybib}

\newpage

\appendix
\section*{Appendix}

\section{Proof of Theorem~\ref{theo:alpha_fair}}\label{app:prof_alpha_fair}

\begin{proof}
    Let $\bm{A}\in\mathbb{R}^{|\mathcal{I}|\times|\mathcal{G}|}$ is the item-group adjacent matrix, and $\bm{A}_{ig} = 1$ indicates item $i\in \mathcal{I}_g$, and 0 otherwise. Let $\bm{w}\in\mathbb{R}^{|\mathcal{I}|} = [-\sum_{u\in\mathcal{U}}c_{u,i}\log(\hat{c}_{u,i})]_{i\in\mathcal{I}}$. 

    Firstly, if an item belongs to multiple groups, it often has a greater impact on other items in the model. Therefore, we will conduct row normalization on the adjacency matrix $A$ with size $I \times G$ to mitigate this influence: we conduct $\hat{\bm{A}}$ is the row-normalized matrix for $\bm{A}$: $\hat{\bm{A}}=\text{diag}(\bm{A}\bm{1})^{-1}\bm{A}$. $\text{diag}(\bm{x})$ denotes to construct a diagonal matrix based on vector $\bm{x}$.

    Then, in RS, since the ranking list $L_K(u)$ is selected according to the highest preference score $c_{u,i}$, therefore
    we can re-write Equation~\eqref{eq:obj} as:
    \begin{equation}
    \begin{aligned}
        \min \quad& \bm{1}^{\top}(\hat{\bm{A}}^{\top}\bm{w}) \\
            \textrm{s.t.}\quad & s_g = \sum_{u\in\mathcal{U}} \sum_{i\in L_K(u)} -\frac{\mathbb{I}(i\in \mathcal{I}_g)}{n_i}c_{u,i}\log(\hat{c}_{u,i}) \leq m_g M, \forall g\in\mathcal{G} \\
            & \sum_{i\in\mathcal{I}} c_{u,i} \leq K, \forall u\in\mathcal{U}.
    \end{aligned}
    \end{equation}

    Then we can still write the Equation as
    \begin{equation}
    \begin{aligned}
        \min \quad& \bm{1}^{\top}(\hat{\bm{A}}^{\top}\bm{w}) \\
            \textrm{s.t.}\quad & \max_{g\in\mathcal{G}} (\hat{\bm{A}}^{\top}\bm{w})_g \leq m_g M.
    \end{aligned}
    \end{equation}

    Due to Lagrange dual method~\citep{boct2008strong}, we can still convert the problem as:
     \begin{equation}
        \min \max_{\lambda} - \bm{1}^{\top}(\hat{\bm{A}}^{\top}\bm{w}) + \lambda (\max_{g\in\mathcal{G}} \frac{(\hat{\bm{A}}^{\top}\bm{w})_g}{m_g} - M).
    \end{equation}

    Therefore, we can always find a $\lambda \ge 0$ ($\lambda$ value relates to the value of $M$), such that we can directly optimize 
    \begin{equation}
        \min \bm{1}^{\top}\hat{\bm{A}}^{\top}\bm{w} + \lambda \max_{g\in\mathcal{G}} \frac{(\hat{\bm{A}}^{\top}\bm{w})_g}{m_g}.
    \end{equation}

    Let $\bm{\gamma}\in\mathbb{R}^{|\mathcal{G}|}$ be the vector $[1/m_1, 1/m_2, \cdots, 1/m_{|\mathcal{G}|}]$, then the equation can be written as:
    
    \begin{equation}\label{eq:linear_trade_off}
        \min \bm{1}^{\top}\hat{\bm{A}}^{\top}\bm{w} + \lambda \max_{g\in\mathcal{G}} \bm{\gamma}_g (\hat{\bm{A}}^{\top}\bm{w})_g.
    \end{equation}

   Furthermore, the minimum function $\max(\cdot)$ can be viewed as the infinite norm function:
    \[
        \max \bm{x} = \lim_{t \to \infty} (\bm{1}^{\top} \bm{x}^t)^{1/t}.
    \]

    Then we consider the following function:
    \[
        g(\bm{x}; \bm{k}; s)= (\bm{k}^{\top} \bm{x}^{1+s})^{\frac{1}{1+s}},
    \]
    where $\bm{0} \leq \bm{x} \leq M\bm{1}$. Therefore, Equation~\eqref{eq:linear_trade_off} can be regarded as a linear trade-off between two points with the $\lambda \ge 0$ as the trade-off coefficient:
    \[
       \min_{w\in\mathcal{W}} g(\hat{\bm{A}}^{\top}\bm{w}; \bm{1}; 0) + \lambda g(\hat{\bm{A}}^{\top}\bm{w}; \bm{\gamma} ;\infty).
    \]

    Since $g(\bm{x}; t)$ is continuous \wrt $t$ and the feasible region of $\bm{x}$ is convex and continuous (because $\bm{w}\in\mathcal{W}$ is the linear transformation over a simplex space~\citep{lindenstrauss1978poulsen}), there exists a constant number $t\ge 0$ and $\bm{b}\in\mathbb{R}^{|\mathcal{G}|}$,
    s.t.
    Equation~\eqref{eq:obj} can be optimized as:
    \[
        \mathcal{L} = \min \bm{b}^{\top}(\hat{\bm{A}}^{\top}\bm{w})^{1+t}.
    \]
    This is because $t$ is a constant number and $\bm{k}$ for optimizing $g(\bm{x}; \bm{k}; s)$ is linear w.r.t. to $x^{(1+s)}$ (Since the $x$ is the variable and $s$ is constant). 

    Note that the specific value of $t$ is an implicit function and cannot be solved explicitly in closed form. This is because according to the fact that the function $g$ is continuous with respect to $s$ over its entire domain and based on the intermediate value theorem for continuous functions, there must exist a $t$ such that the linear combination of the linear functions at the two endpoints equals.
    
    Nonetheless, we emphasize that the subsequent methods and proof strategies are independent of the explicit solution for $t$. As long as there exists a $t\neq0$, the Jansen gap exists, and as $\lambda$ increases, $t$ will also increase.




\end{proof}

\section{Lemma~\ref{lemma:partition}}
\begin{lemma}\label{lemma:partition}
    When $t\ge 0$, let 
    \[
    f(x)=x^{t+1}
    \]
    
    where $x>0$. And
    \begin{equation}
        \begin{aligned}
            \quad& e(i) = \sum_{l=1}^i f(y_l) \\
            \textrm{s.t.}\quad & \sum_{l=1}^i y_l \leq c, \quad y_l \ge 0\\
        \end{aligned}
    \end{equation}
    where $c$ is a constant number. Then we have when $j\ge i$: we have $\min_{y_j} e(j) \leq \min_{y_i} e(i)$.

\end{lemma}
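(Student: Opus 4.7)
The plan is to exploit convexity of $f(x)=x^{t+1}$ for $t\ge 0$ and show that the optimal value of $e(i)$ is attained by distributing the total budget equally among the $i$ coordinates, producing a closed-form minimum that is monotone non-increasing in $i$. I would first flag that the nontrivial content of the lemma lives in the case where the budget is consumed, i.e.\ $\sum_{l=1}^i y_l = c$: under the stated inequality $\sum_{l=1}^i y_l \le c$ with $y_l\ge 0$, one can simply pick $y_l=0$ to obtain $\min e(i)=f(0)=0$ for every $i$ and the claim becomes vacuous. Working under the equality constraint matches the way the lemma is invoked in the proof of Theorem~\ref{theo:error}, where each $e(i)$ corresponds to a total accumulated group utility split across $i$ batches.

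The first step is to compute $\min e(i)$ explicitly. Since $f$ is convex on $[0,\infty)$ when $t\ge 0$, Jensen's inequality gives
\[
    \frac{1}{i}\sum_{l=1}^{i} f(y_l) \;\ge\; f\!\left(\frac{1}{i}\sum_{l=1}^{i} y_l\right) \;=\; f(c/i),
\]
with equality iff $y_1=\cdots=y_i=c/i$. Hence $\min e(i) = i\,f(c/i) = c^{\,t+1}\,i^{-t}$. Equivalently, I can verify this via Lagrange multipliers: stationarity of $\sum y_l^{t+1}$ subject to $\sum y_l=c$ forces $y_l^{t}$ to be constant across $l$, and nonnegativity plus the constraint pin down $y_l=c/i$.

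The second step is a monotonicity comparison: since $t\ge 0$ and $j\ge i\ge 1$ imply $j^{-t}\le i^{-t}$, we get
\[
    \min e(j) \;=\; c^{\,t+1}\,j^{-t} \;\le\; c^{\,t+1}\,i^{-t} \;=\; \min e(i),
\]
which is the desired inequality. The main (mild) obstacle I anticipate is handling the boundary case $t=0$, where $f$ is linear and every feasible partition of $c$ gives the same sum $c$, so $\min e(j)=\min e(i)=c$, still consistent with the claimed inequality. A secondary point worth checking is that the same pattern gives the $\le$ version: for any fixed total $s\in[0,c]$, the minimum over $i$ variables summing to $s$ equals $s^{t+1}\,i^{-t}$, which is still non-increasing in $i$ for each $s$, so taking the infimum over $s\in[0,c]$ preserves the inequality. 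Apart from that, the proof is elementary and reduces to one application of Jensen's inequality plus one line of monotonicity in $i$.
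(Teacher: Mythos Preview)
Your proposal is correct and follows essentially the same route as the paper: identify the minimizer as the equal split $y_l=c/i$, write down the closed form for $\min e(i)$, and then compare via the monotonicity of $i\mapsto i^{-t}$. The only substantive differences are cosmetic: the paper reaches the equal split via Lagrange multipliers while you lead with Jensen's inequality (and mention Lagrange as an alternative), and you are more careful than the paper in flagging that the stated inequality constraint $\sum y_l\le c$ must really be read as an equality for the statement to have content, and in correctly writing the minimum as $i\,f(c/i)=c^{t+1}i^{-t}$ rather than just $(c/i)^{1+t}$.
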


\begin{proof}

    According to the Lagrange dual method, we have
    \[
        \min e(i) = \min \max_{\lambda\ge 0} \sum_{l=1}^i f(y_l) + \lambda (\sum_{l=1}^i y_l) - \lambda c,
    \]
    then according to the condition of the first derivative equaling zero, we have
    \[
        \frac{\partial e(i)}{\partial y_l} = y_i^{t} + \lambda = 0, \quad \frac{\partial e(i)}{\partial \lambda} = \sum_{l=1}^i y_l - c = 0.
    \]
    Taking these two condition together, we have:
    \[
        y_k = y_m = \frac{c}{i}, \quad \forall k,m = [1,2,\cdots, i].    
    \]
    Therefore, we have

    \begin{align*}
        \min_{y_j} e(j) - \min_{y_i} e(i) & =\min_{y_j} \sum_j f(y_j) - \min_{y_i}\sum_i f(y_i)\\
        &= (\frac{c}{j})^{1+t} - (\frac{c}{i})^{1+t} 
    \end{align*}

    Then we can see function $\frac{1}{x^{1+t}}$ is a decreasing function function, therefore, $\min_{y_j} e(j) \leq \min_{y_i} e(i)$.



    
\end{proof}

\section{Proof of Theorem~\ref{theo:error}}\label{app:prof_error}

\begin{proof}
    Under mini-batch sample strategies, we partition the user set $\mathcal{U}$ into $|\mathcal{U}|/B$ subsets and perform optimization on each subset. For each batch, the optimization becomes
    \begin{equation}\label{eq:partition}
         \begin{aligned}
        \mathcal{L}^B = \min \quad& \sum_{j=1}^{|\mathcal{U}|/B} \bm{b}^{\top}(\hat{\bm{A}}^{\top}\bm{w}_j)^{1+t} \\
            \textrm{s.t.}\quad & \bm{w}_{j,i} = -\sum_{u\in\mathcal{U}_j}c_{u,i}\log (\hat{c}_{u,i}), \forall i\in\mathcal{I}, j\in [1,2,\cdots, |\mathcal{U}|/B],
    \end{aligned}
    \end{equation}
    where $\mathcal{U}_b$ is the $b-$th partition of the user set $\mathcal{U}$.

     Since the function $f(x) = x^{1+t}$ is not a linear function, we have 
    \[
        \sum_{j=1}^{|\mathcal{U}|/B} \bm{b}^{\top}(\bm{A}^{\top}\bm{w}_j)^{1+t} \neq \bm{b}^{\top}(\bm{A}^{\top}\bm{w})^{1+t}.
    \]
    and we can get the Jensen gap
    \[
        J(B) = |\mathcal{L}^B - \mathcal{L}| \neq 0.
    \]

    Then we will observe how $e(B)$ changes \wrt the mini-batch size $B$.
    
    Let $\bm{e} = \hat{\bm{A}}^{\top}\bm{w}$, where each element $\bm{e}_g$ represents the utility (sum of user-item scores) of group $g$. According to the recommendation constraint, we have $\bm{e}_g \leq L$, meaning that the utility of group $g$ is at least as high as when all items belonging to group $g$ are recommended to the users. 
    
    Therefore, taking $f(\bm{e}_g)=\bm{e}_g^{1+t}$ into Lemma~\ref{lemma:partition}, without loss of generality, when batch size $B_2\leq B_1$, we have: $|\mathcal{U}|/B_2\ge |\mathcal{U}|/B_1$,
    we can easily have:
    \[
         \min \sum_{j=1}^{|\mathcal{U}|/B_2} (\bm{A}^{\top}\bm{w}_j)_g^{1+t}\leq \min \sum_{j=1}^{|\mathcal{U}|/B_1} (\bm{A}^{\top}\bm{w}_j)_g^{1+t} \leq \min \bm{e}_g^{1+t}.
    \]
    Therefore, we have
    \[
        \min \sum_{j=1}^{|\mathcal{U}|/B_2} \bm{b}^{\top}(\bm{A}^{\top}\bm{w}_j)^{1+t}\leq \min \sum_{j=1}^{|\mathcal{U}|/B_1} \bm{b}^{\top}(\bm{A}^{\top}\bm{w}_j)^{1+t} \leq \min \bm{b}^{\top}\bm{e}^{1+t}.
    \]
    In other words, the mini-batch size becomes smaller, and we are more likely to underestimate the original loss function that trades off MMF and recommendation accuracy. The recommendation loss underestimation will result in the Jensen gap when optimizing the loss function constraint with MMF.


    
\end{proof}

\section{Lemma~\ref{lem:reg_form}}
\begin{lemma}\label{lem:reg_form}
Considering the following function, for $\lambda>0, L>0$ and for the $d$-th dimension variable $\bm{\mu}\in\mathbb{R}^d$, when $\bm{\mu}\in\mathcal{M}$:
\begin{equation}
    r(\bm{\mu}) = \max_{\bm{x}\leq \bm{m}} (\min {\bm{x}}/\bm{m}+\bm{\mu}^{\top}\bm{x}/\lambda),
\end{equation}
where 
\[
    \mathcal{M} =\left\{\bm{\mu} ~\left|~ \sum_{i\in [d]} \bm{\mu}_im_i \ge -\lambda, \forall [d]\in \mathcal{S}\right.\right\},
\]  
where $\mathcal{S}$ is power set of $[1,2,\cdots, d]$, \ie the set of all subsets of $[1,2,\cdots, d]$. 

When $\bm{\mu}\in\mathcal{M}$, the optimization function $r(\cdot)$ has a closed form:
$
    r(\bm{\mu}) = \bm{m}^{\top}\bm{\mu}/\lambda + 1,
$
and
$
    \bm{m} = \argmax_{\bm{x}\leq \bm{m}} (\min {\bm{x}}/\bm{m}+\bm{\mu}^{\top}\bm{x}/\lambda).
$

When $\bm{\mu} \notin\mathcal{M}$, the function $r(\cdot)$ will diverge to $\infty$.
\end{lemma}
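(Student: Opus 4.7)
\section*{Proof proposal for Lemma~\ref{lem:reg_form}}

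The plan is to reduce the optimization to a one-dimensional problem by introducing the slack $t=\min_i \bm{x}_i/\bm{m}_i$, and then to do a coordinate-wise analysis based on the sign of $\bm{\mu}_i$. Concretely, I would rewrite
\[
  r(\bm{\mu}) \;=\; \max_{t\le 1}\;\max_{\bm{x}:\; tm_i \le x_i\le m_i}\; \Bigl(t + \tfrac{1}{\lambda}\bm{\mu}^{\top}\bm{x}\Bigr),
\]
since imposing $t=\min_i x_i/m_i$ is equivalent to the bilateral bound $tm_i \le x_i \le m_i$ on every coordinate together with the single one-sided condition $t\le 1$ (which comes from $\bm{x}\le\bm{m}$). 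Here I am implicitly using $\bm{m}>0$, which is natural in the paper since $m_g$ are group weights.

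Next I would solve the inner maximization over $\bm{x}$ with $t$ fixed. Because the objective is linear in $\bm{x}$ and decouples across coordinates, the optimizer is a vertex of the box: $x_i=m_i$ whenever $\mu_i\ge 0$ and $x_i=tm_i$ whenever $\mu_i<0$. Plugging back yields
\[
  r(\bm{\mu}) \;=\; \max_{t\le 1}\;\Bigl[\, \tfrac{1}{\lambda}\!\!\sum_{i:\mu_i\ge 0}\!\!\mu_i m_i \;+\; t\Bigl(1 + \tfrac{1}{\lambda}\!\!\sum_{i:\mu_i<0}\!\!\mu_i m_i\Bigr) \Bigr].
\]
This is a linear program in the scalar $t\in(-\infty,1]$, so its value depends only on the sign of the coefficient
$\,\kappa(\bm{\mu}) := 1+\tfrac{1}{\lambda}\sum_{i:\mu_i<0}\mu_i m_i.$

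Now I would split into two cases. If $\kappa(\bm{\mu})\ge 0$, the maximum is attained at $t=1$; substituting $t=1$ collapses the two coordinate branches into $x_i=m_i$ for every $i$, so the optimizer is $\bm{x}=\bm{m}$ and the value is $\tfrac{1}{\lambda}\bm{m}^{\top}\bm{\mu}+1$, as claimed. If $\kappa(\bm{\mu})<0$, the coefficient of $t$ is negative and $t$ is unbounded below, so sending $t\to -\infty$ makes $r(\bm{\mu})=+\infty$.

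The only remaining step is to identify $\{\bm{\mu}:\kappa(\bm{\mu})\ge 0\}$ with $\mathcal{M}$. Since $m_i>0$, among all subsets $S\subseteq\{1,\dots,d\}$ the sum $\sum_{i\in S}\mu_i m_i$ is minimized exactly by $S^\star=\{i:\mu_i<0\}$; therefore $\sum_{i\in S}\mu_i m_i\ge -\lambda$ for every subset $S$ iff $\sum_{i:\mu_i<0}\mu_i m_i\ge -\lambda$, which is precisely $\kappa(\bm{\mu})\ge 0$. Combining with the two cases above gives both conclusions of the lemma. The main obstacle is the bookkeeping in the $\bm{x}$-reparametrization, in particular checking that the reformulation via $t$ is equivalent to the original $\min$ and that the vertex solution collapses to $\bm{x}=\bm{m}$ when $t=1$; once these are in hand, the rest is a one-variable LP and a subset-minimization argument.
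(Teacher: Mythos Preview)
Your proposal is correct and takes a genuinely different route from the paper's own argument. The paper first shifts via $\bm{z}=\bm{x}/\bm{m}-\bm{1}$ and reduces to showing $s(\bm{v})=\max_{\bm{z}\le 0}(\min_i z_i+\bm{z}^\top\bm{v})$ equals $0$ on $\mathcal{M}$ and $+\infty$ outside; for the divergence it constructs an explicit $\bm{z}$ supported on a violating subset, and for $s(\bm{v})\le 0$ it sorts the coordinates of $\bm{z}$ and applies a telescoping summation-by-parts identity. You instead lift $\min_i x_i/m_i$ to an auxiliary scalar $t$, solve the resulting separable box LP in $\bm{x}$ at its vertices, and are left with a one-variable linear program in $t$ whose slope is exactly $\kappa(\bm{\mu})$; the identification $\{\kappa\ge 0\}=\mathcal{M}$ via the minimizing subset $S^\star=\{i:\mu_i<0\}$ then finishes both cases at once. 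Your approach is arguably cleaner and makes the role of $\mathcal{M}$ more transparent, while the paper's telescoping argument avoids having to justify that the relaxation from $t=\min_i x_i/m_i$ to $t\le \min_i x_i/m_i$ is tight (a point you correctly flag as needing care: it is tight here because increasing $t$ only raises the objective whenever the inequality is strict). One small wording issue: your sentence ``imposing $t=\min_i x_i/m_i$ is equivalent to the bilateral bound'' is literally false (the box only gives $t\le\min_i x_i/m_i$), but the equality of optima holds for the reason just stated, so the argument goes through.
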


\begin{proof}
    Let the variable $\bm{z} = \bm{x}/\bm{m}-\bm{1}$. Then we have:
\begin{align*}
    r(\bm{\mu}) &= \max_{\bm{x}\leq \bm{m}}\left[\min \bm{x}/\bm{m} + \bm{\mu}^{\top}\bm{x}/\lambda\right]\\
    &= \bm{\mu}^{\top}\bm{m} /\lambda +  1 + \max_{\bm{z} \leq \bm{0}}\left[\min_i \bm{z}_i + (1/\lambda) \bm{\mu}^{\top}(\bm{z}\odot\bm{m})\right],
\end{align*}
where $\odot$ is the Hadamard product.

Let 
\[
    \bm{v} = \bm{m}\odot\bm{\mu}/\lambda,
\]
then we define 
\[
    s(\bm{v}) = \max_{\bm{z}\leq 0}\left( \min_i \bm{z}_i + \bm{z}^{\top}\bm{v}\right).
\]

From the definition of the region $\mathcal{M}$, we can re-wright $\mathcal{M}$ as
\[
    \mathcal{M} = \{\bm{v}|\sum_{i\in [d]}\bm{v}_i \ge -1, \forall [d]\in \mathcal{S}\}.
\]

Suppose that there exists a subset $\mathcal{S}\in [1,2,\cdots, d]$ such that $\sum_{i\in\mathcal{S}} \mathbf{v}_i < -1$. For any $\epsilon/|\mathcal{S}| > 1$, we can get a feasible solution:
\begin{align*}
\begin{split}
\bm{v}_i= \left \{
\begin{array}{ll}
   -\epsilon/|\mathcal{S}|,                    & i\in \mathcal{S}\\
    0,                    & otherwise.
\end{array}
\right.
\end{split}
\end{align*}
Then, because such solution is feasible and $\min_i \bm{z}_i = -\epsilon$, and $|\mathcal{S}|\ge 1$, we obtain that 
\begin{align*}
    s(\bm{v}) &\ge \min_i \bm{z}_i - (\epsilon/|\mathcal{S}|)(\sum_{i\in\mathcal{S}}\bm{v}_i) = -\epsilon(\sum_{i\in\mathcal{S}}\bm{v}_i+1/|\mathcal{S}|)\\
    &\ge \epsilon(\sum_{i\in\mathcal{S}}\bm{v}_i+1).   
\end{align*}

Let $\epsilon\rightarrow\infty$, we have $s(\bm{v})\rightarrow\infty$.

Then we show that $s(\bm{\mu}) = 0$ for $\bm{v}\in\mathcal{M}$. Note that $\bm{z} = 0$ is feasible. Therefore, we have
\[
    \min_i \bm{v}_i \ge s(0) = 0.
\]

Then we have $\bm{z} \leq 0$ and without loss of generality, that the vector $\bm{z}$ is sorted in increasing order, i.e., $\bm{z}_1\leq \bm{z}_2, \cdots, \leq \bm{z}_d$.
The objective value is
\begin{align*}
     s(\bm{v}) &= \mathbf{z}_1 + \bm{v}^{\top}\bm{z} \\
     &= \sum_{j=1}^{d}\left(\bm{z}_j-\bm{z}_{j+1}\right)\left(1+\sum_{i=1}^{j}\bm{v}_j\right)\leq 0.
\end{align*}

Thus we can have $s(\bm{\mu}) = 0$ for $\mathbf{v}\in\mathcal{M}$. Finally, we can have 
$
    \argmax_{\bm{x}\leq \bm{m}} (\min_g {\bm{x}_g}/\bm{m}_g+\bm{\mu}^{\top}\bm{x}/\lambda)=\bm{m}.
$


    
\end{proof}

\section{Lemma~\ref{lem:convex}}
\begin{lemma}\label{lem:convex}
The feasible space $\mathcal{M}$ of dual variable $\bm{\mu}$ is convex.
\end{lemma}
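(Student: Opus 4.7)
The plan is to recognize $\mathcal{M}$ as a finite intersection of closed half-spaces, since $|\mathcal{G}_s| = 2^{|\mathcal{G}|}$, and then invoke the elementary fact that intersections of convex sets are convex.

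First, I would rewrite each defining inequality in the form of a half-space. For every subset $\mathcal{S} \in \mathcal{G}_s$, define
\[
\mathcal{H}_{\mathcal{S}} = \left\{ \bm{\mu} \in \mathbb{R}^{|\mathcal{G}|} ~\left|~ \sum_{g\in\mathcal{S}} \bm{\mu}_g m_g \ge -\lambda \right. \right\}.
\]
Each $\mathcal{H}_{\mathcal{S}}$ is the sub-level (actually super-level) set of the linear functional $\bm{\mu}\mapsto \sum_{g\in\mathcal{S}} \bm{\mu}_g m_g$, hence a closed half-space and trivially convex. Then one simply writes $\mathcal{M} = \bigcap_{\mathcal{S}\in\mathcal{G}_s} \mathcal{H}_{\mathcal{S}}$.

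Second, I would either invoke the standard result that an arbitrary intersection of convex sets is convex, or verify it by hand: take $\bm{\mu}^{(1)}, \bm{\mu}^{(2)} \in \mathcal{M}$ and $\theta \in [0,1]$. Then for every $\mathcal{S}\in\mathcal{G}_s$, by linearity
\[
\sum_{g\in\mathcal{S}} \bigl(\theta \bm{\mu}^{(1)}_g + (1-\theta)\bm{\mu}^{(2)}_g\bigr) m_g = \theta \sum_{g\in\mathcal{S}} \bm{\mu}^{(1)}_g m_g + (1-\theta)\sum_{g\in\mathcal{S}} \bm{\mu}^{(2)}_g m_g \ge -\theta\lambda - (1-\theta)\lambda = -\lambda,
\]
so $\theta \bm{\mu}^{(1)} + (1-\theta)\bm{\mu}^{(2)} \in \mathcal{M}$, establishing convexity.

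There is no real obstacle here; the proof is essentially immediate once the half-space structure is pointed out. The only thing worth noting is that $\mathcal{M}$ is in fact a (closed) convex polyhedron, which could be useful for the convergence analysis elsewhere in the paper, but that observation is not needed for this lemma.
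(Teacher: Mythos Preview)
Your proof is correct and is the cleanest way to handle this lemma: recognize $\mathcal{M}$ as a finite intersection of half-spaces and conclude convexity (indeed polyhedrality) immediately.

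The paper takes a different route. It invokes Lemma~\ref{lem:reg_form}, which characterizes $\mathcal{M}$ as the set where the function $r(\bm{\mu}) = \max_{\bm{x}\le \bm{m}}\bigl(\min \bm{x}/\bm{m} + \bm{\mu}^\top\bm{x}/\lambda\bigr)$ is finite, and then argues via finiteness of $r$. In principle this can work, since $r$ is a pointwise supremum of affine functions and hence convex, so its effective domain $\{\bm{\mu}: r(\bm{\mu})<\infty\}$ is automatically convex. However, the paper's written argument does not actually carry this out: it shows that if $\bm{\mu}\in\mathcal{M}$ and $\bm{b}\ge \bm{0}$, $c>0$, then $\bm{\mu}+c\bm{b}\in\mathcal{M}$, which is an \emph{upward-closure} property rather than convexity. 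Your direct half-space argument avoids this detour entirely, requires no appeal to Lemma~\ref{lem:reg_form}, and delivers the stronger conclusion that $\mathcal{M}$ is a closed polyhedron --- a fact that is useful for the projection step in Equation~\eqref{eq:dual_update}.
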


\begin{proof}
    Suppose $\bm{\mu}\in\mathcal{M}$, from Lemma~\ref{lem:reg_form}, we have
    \begin{align*}
       r(\bm{\mu}) = \max_{\bm{x}\leq \bm{\gamma}} (\min {\bm{x}}/\bm{\gamma}+\bm{\mu}^{\top}\bm{x}/\lambda) < \infty,
    \end{align*}
    therefore, for any $\bm{b}\in\mathbb{R}_{+}^{|G|}$ and $c>0$, we have
    \begin{align*}
       r(\bm{\mu}+c\bm{b}) &= \max_{\bm{x}\leq \bm{\gamma}} (\min {\bm{x}}/\bm{\gamma}+(\bm{\mu}+c\bm{b})^{\top}\bm{x}/\lambda) \\
       &=r(\bm{\mu}) + Lc\bm{b}^{\top}\bm{1} < \infty.
    \end{align*}
    Therefore, $\bm{\mu}+c\bm{b}\in\mathcal{M}$.
\end{proof}

\section{Proof of Theorem~\ref{theo:reweight}}\label{app:prof_reweight}

\begin{proof}
    Let $\bm{e} = \bm{A}\bm{w}$, where each element $\bm{e}_g$ measures the ranking score accumulated among group $g$. Let $L$ be the maximum ranking score for each group, \ie $\bm{e}\leq L\bm{1}$.
    
    According to the proof in Theorem~\ref{theo:alpha_fair}, we can see the Equation~(\ref{eq:obj}) can be written as:
     \begin{align*}
            \quad& \min \sum_{u\in\mathcal{U}}\sum_{i\in \mathcal{I}}c_{u,i}\log(\hat{c}_{u,i}) + \lambda (\max_{g\in\mathcal{G}} \bm{e}_g/m_g) \\
            \textrm{s.t.}\quad & \bm{e}_g = -\frac{\mathbb{I}(i\in \mathcal{I}_g)}{n_i}c_{u,i}\log(\hat{c}_{u,i}), \forall g\in\mathcal{G}.
    \end{align*}

    Then the equation can be re-written as:

    \begin{equation}\label{eq:allocation}
        \begin{aligned}
            \quad& \max_{\hat{c}_{u,i}} \sum_{u\in\mathcal{U}}\sum_{i\in \mathcal{I}}c_{u,i}\log(\hat{c}_{u,i}) + \lambda (\min_{g\in\mathcal{G}} \bm{e}_g/m_g) \\
            \textrm{s.t.}\quad & \bm{e}_g = \sum_{u\in\mathcal{U}}\sum_{i\in \mathcal{I}} \mathbb{I}(i\in\mathcal{I}_g)c_{u,i}\log(\hat{c}_{u,i}), \forall g\in\mathcal{G}.
        \end{aligned}
    \end{equation}
    Then e can utilize the Lagrangian condition~\citep{balseiro2021regularized} to decompose the relation between $\bm{e}$ and model prediction $\hat{c}_{u,i}$ in Equation~\eqref{eq:allocation}:
    \begin{equation}\label{eq:dual_form}
        \begin{aligned}
        \quad& \max_{\hat{c}_{u,i}} \min_{\bm{\mu}} \sum_{u\in\mathcal{U}}\sum_{i\in \mathcal{I}}c_{u,i}\log(\hat{c}_{u,i}) + \lambda (\min_{g\in\mathcal{G}} \bm{e}_g/m_g) - \sum_{g\in\mathcal{G}}\bm{\mu}_g\left(\bm{e}_g-\sum_{u\in\mathcal{U}}\sum_{i\in \mathcal{I}} \mathbb{I}(i\in\mathcal{I}_g)c_{u,i}\log(\hat{c}_{u,i})\right)\\
        &\leq \min_{\bm{\mu}} \max_{\hat{c}_{u,i}} \sum_{u\in\mathcal{U}}\sum_{i\in \mathcal{I}}c_{u,i}\log(\hat{c}_{u,i}) + \lambda (\min_{g\in\mathcal{G}} \bm{e}_g/m_g) + \sum_{g\in\mathcal{G}}\bm{\mu}_g\left(\bm{e}_g-\sum_{u\in\mathcal{U}}\sum_{i\in \mathcal{I}} \mathbb{I}(i\in\mathcal{I}_g)c_{u,i}\log(\hat{c}_{u,i})\right)\\
        &= \min_{\bm{\mu}} \max_{\hat{c}_{u,i}}  \left(\sum_{u\in\mathcal{U}}\sum_{i\in \mathcal{I}} (1-\sum_{g\in\mathcal{G}}\bm{\mu}_g\mathbb{I}(i\in\mathcal{I}_g))c_{u,i}\log(\hat{c}_{u,i})\right) + \lambda \min_g \bm{e}_g/m_g + \bm{\mu}^{\top}\bm{e}\\
        &= \min_{\bm{\mu}} \max_{\hat{c}_{u,i}}  \left(\sum_{u\in\mathcal{U}}\sum_{g\in\mathcal{G}}(1-\bm{\mu}_g)\sum_{i\in\mathcal{I}_g}c_{u,i}\log(\hat{c}_{u,i})\right) + \lambda \min_g \bm{e}_g/m_g + \bm{\mu}^{\top}\bm{e}.\\
        \end{aligned}
    \end{equation}

    From the Equation~\eqref{eq:dual_form}, we can observe that the recommendation task constrained by max-min fairness can be viewed as a re-weighting approach across different groups on the original loss function solely optimized for accuracy:
    \[
        \mathcal{L} = \min -\sum_{u\in\mathcal{U}}\sum_{g\in\mathcal{G}}\bm{s}_g\sum_{i\in\mathcal{I}_g}c_{u,i}\log(\hat{c}_{u,i}),
    \]
    where the fairness weight $\bm{\mu}$ is determined by 
     \[
    \bm{\mu} =  \argmin_{\bm{\mu}\in\mathcal{M}} \left(\max \sum_{u\in\mathcal{U}}\sum_{g\in\mathcal{G}}\bm{s}_g\sum_{i\in\mathcal{I}_g}c_{u,i}\log(\hat{c}_{u,i}) + \lambda r^*(\bm{\mu})\right),
    \]
    \[
    r^*(\bm{\mu}) = \max_{\bm{w}\leq \bm{m}} \left(\min_g (\bm{A}\bm{w})_gm_g+\bm{A}^{\top}\bm{w}\bm{\mu}/\lambda\right)=\bm{m}^{\top}\bm{\mu}/\lambda+1.
    \]

    To make sure the functions do not diverge, we need to ensure $r^*(\bm{\mu})< \infty$. Taking the $r^*(\bm{\mu})$ into Lemma~\ref{lem:reg_form}, we show
    $\bm{\mu}\in\mathcal{M}$, where
    \[
    \mathcal{M}=\left\{\bm{\mu} ~\left|~ \sum_{g\in\mathcal{S}} \bm{\mu}_gm_g \ge -\lambda, \forall \mathcal{S}\in\mathcal{G}_s\right.\right\},
    \]
    where $\mathcal{G}_s$ is power set of $\mathcal{G}$, i.e., the set of all subsets of $\mathcal{G}$.

\end{proof}

\section{Proof of Theorem~\ref{theo:Jensen_Gap}}\label{app:prof_Jensen_Gap}
\begin{proof}
    \textbf{We first bound the performance on the primal space}. 
    
    Let $N=\frac{|\mathcal{U}|}{B}$ be the total batch number.
    Considering the $j-$th batch, we have the accuracy loss function without fairness at $j-$th batch as:
    \[
        \mathcal{L}^j(\text{ACC}) = (\bm{s}^j + \bm{A}^j\bm{\mu}^j)^{\top}\bm{l}^j = \bm{1}^{\top}\bm{l}^j,
    \]
    and the max-min fairness loss function will become:
    \[
        \mathcal{L}^j(\text{Fair}) =  r^*(\bm{\mu}) - (\bm{\mu}^j)^{\top}\bm{e}/\lambda,
    \]
    therefore, the overall loss across $\mathcal{L}^B$ on the primal space utilizing batch training will become:
    \begin{align*}
        N\mathbb{E}_j[\mathcal{L}^j] &= N\mathbb{E}_j[\mathcal{L}^j(\text{ACC})+\lambda \mathcal{L}^j(\text{Fair})] \\
        &= N\mathbb{E}_j[(\bm{s}^j + \bm{A}^j\bm{\mu}^j)^{\top}\bm{l}^j + \lambda r^*(\bm{\mu}) - (\bm{\mu}^j)^{\top}\bm{e}]\\
        &= \mathcal{L}^{'B} - N\mathbb{E}_j[(\bm{\mu}^j)^{\top}(\bm{e}-(\bm{A}^j)^{\top}\bm{l}^j)].
    \end{align*}

    The term 
    \[
    w(\bm{\mu}^j)= (\bm{\mu}^j)^{\top}(\bm{e}-(\bm{A}^j)^{\top}\bm{l}^j)
    \]
    is considered as the complementary slackness in dual theory~\citep{churchman1957introduction}, which captures error from the dual transformation. And $\mathcal{L}^{'B}$ is the same in Equation~\eqref{eq:dual_loss}. Therefore, the original loss can be viewed as the dual form augmented with a complementary slackness form. 

    \textbf{Then we utilize the online gradient descent to bound the complementary slackness}.

    Let $\bm{\mu} = \sum_{j=1}^N \bm{\mu}^j$, then the loss without dividing the full dataset into batches can be represented as:
    \[
        \mathcal{L} = \mathcal{L}' - w(\bm{\mu}).
    \]

    After observing the dual form of $\mathcal{L}'$, we can see the $\mathcal{L}'$ is linear \wrt dual variable $\bm{\mu}$, therefore, we have
    \[
        \mathcal{L}' = \mathcal{L}^{'B},
    \]
    and the Jensen gap 
    \[
        J(B) = |\sum_{j=1}^Nw(\bm{\mu}^j)-w(\bm{\mu})|.
    \]

    Given $\|\widetilde{g}^j\|_2\leq G$, for all $j$, we have:
    \[
        \| \mathbf{g}^j \|_2 = \|(1-\alpha)\sum_{s=1}^j\alpha^{j-s}(\widetilde{\mathbf{g}}^s)\|_2 \leq G.
    \]
    Next, we will bound the value of $G$. Firstly, according to the dual gradient descent, we have:
    \[
        \widetilde{\bm{g}}^j=\partial (\bm{s}^j\mathcal{L}^j + \lambda r^*(\bm{\mu}^j)) = -(\mathbf{A}^j)^{\top} \widetilde{\bm{w}}+ \bm{\gamma}_j.
    \]
    where $\bm{\gamma}_j$ is the remain maximum loss column at $j-$th updating batch. 
    
    Therefore, we have the each element of $\widetilde{\bm{w}}_b$ has the bound of
    \[
        \widetilde{\bm{w}}_b \leq K,
    \]
   since each user can obtain a maximum ranking score of 1 for each preferred item in the ranking list with a size of $K$. Typically, the group size is smaller than batch size ($|\mathcal{G}|<B$) and there exists $c=\max_g m_g$ (typically, $m_g$ is proportional to the group size $|\mathcal{G}|$. Then we get
    \[
        \|\widetilde{\bm{g}}^j\|_2^2 \leq |\mathcal{G}|(c + K)^2 \leq L|\mathcal{G}|^2,
    \]
    where $L>0$.

    According to the Theorem 2 in~\cite{balseiro2021regularized}, we have
    \begin{align*}
        J(B) = |\sum_{j=1}^{N}w(\bm{\mu}^j) - w_t(\bm{\mu})| &\leq \frac{H}{\eta} + \frac{G^2}{(1-\alpha)\sigma}\eta\frac{|\mathcal{U}|}{B} + \frac{G^2}{2(1-\alpha)^2\sigma\eta} \\
        &= \frac{H}{\eta} + \frac{|\mathcal{U}|L|\mathcal{G}|^2}{B(1-\alpha)\sigma}\eta + \frac{L|\mathcal{G}|^2}{2(1-\alpha)^2\sigma\eta}
    \end{align*}

    where function $\|\cdot\|_2^2$ is $\sigma-$strongly convex.
    When setting learning rate $\eta=O(B^{-1/2})$, the Jensen Bound is comparable with $O(B^{-1/2})$.

\end{proof}

\section{Generalizability to Other Forms of Fairness}\label{app:generalize}
In fact, our method can be easily generalized to the user group level by replacing the adjacency matrix with a user-side equivalent while keeping the rest unchanged. For the two-sided form, it simply requires introducing two coefficients, $\lambda_1$ and $\lambda_2$, and applying two independent dual gradient descent updates as described in our algorithm.

In Theorem~\ref{theo:alpha_fair}, we demonstrate that our optimization objective is equivalent to the power-family fairness framework, which encompasses mainstream fairness definitions such as Entropy Fairness, $\alpha$-Fairness, and Theil Index~\cite{lan2011axiomatic}. Consequently, our method is highly adaptable and can be generalized to various fairness objectives within this framework.

\section{Details of Experimental Settings}\label{app:exp_settings}
Here we will provide the details of experimental settings.

\textbf{Detailed Implementation Details.}

\begin{itemize}
\item Environment: our experiments were implemented using Python 3.9 and PyTorch 2.0.1+cu117~\citep{pytorch}. All experiments were conducted on a server with an NVIDIA A5000 running Ubuntu 18.04. We implement FairDual with the cvxpy~\citep{cvxpy} for optimization.

\item Hyper-parameter settings: the learning rate $\eta\in [1e^{-2},1e^{-4}]$ (results shown in Figure~\ref{fig:para_analysis2}), and trade-off factor $\lambda\in [0, 10]$ (results shown in Figure~\ref{fig:analysis}). We set the $m_g$ as the group size $m_g=|\mathcal{I}_g|$. We also tune sample number $Q\in [50, 400]$ (results shown in the Table~\ref{tab:sample_size}), historical length $H\in [3,7]$ (results shown in Table~\ref{tab:history_length}), freeze parameter updating gap $\beta\in[128, 3840]$ (results shown in Figure~\ref{fig:para_analysis}). 

\item LLMs settings: To mitigate the impact of randomness, we set the temperature coefficient to 0.2 for the LLM and ran each model three times, taking the average of the results. Other LLMs settings are: the penalty for frequency is 0.0, and the penalty for presence is 0.0, the maximum generated token number to 1024.

\item Used toolkit: For the Non-LLMs-RS backbones, we mainly reference the RecBole toolkit\footnote{https://github.com/RUCAIBox/RecBole}. For the LLMs tuning, we reference the BigRec pipelines \footnote{https://github.com/SAI990323/BIGRec}. And we have also included our code in the supplementary materials to ensure reproducibility.
\end{itemize}

\textbf{Datasets}. The experiments are conducted on the commonly used two widely used and publicly available recommendation datasets, including: 
\begin{itemize}
    \item MIND~\citep{wu2020mind}\footnote{\url{https://microsoftnews.msn.com}}: it is constructed from user news click behavior logs on the Microsoft News platform. we utilize the major topic category of the news to group the items. The dataset contains 94,057 users, 18,801 items, 124,154 interactions, and 17 groups.
    \item Amazon-Book~\footnote{\url{http://jmcauley.ucsd.edu/data/amazon/}}: 
The Amazon dataset from the book domain~\citep{he2016ups} with item grouping based on the "categories" field. As part of the preprocessing~\citep{xu2024fairsync}, groups containing fewer than 50 items are amalgamated into a single group, referred to as the ``infrequent group''. The dataset contains 15,362,619 users, 1,175,085 items, 1,051,862 interactions, and 25 groups.
    \item Amazon-Electronic~\footnote{\url{http://jmcauley.ucsd.edu/data/amazon/}}: 
The Amazon dataset from the electronic products~\citep{he2016ups} with item grouping based on the "categories" field. As part of the preprocessing~\citep{xu2024fairsync}, groups containing fewer than 50 items are amalgamated into a single group, referred to as the ``infrequent group''. The dataset contains 728,719 users, 160,052 items, 6,739,590 interactions, and 19 groups.
\end{itemize}

\textbf{Backbones}. For the backbone, we first select three large-scale recommender models:
\begin{itemize}
    \item \textbf{NRMS}~\citep{wu-etal-2019-neural-news} with 110M parameters utilizes BERT~\citep{devlin2018bert} as the feature extractor.
    \item \textbf{RecFormer}~\citep{Recformer} with 150M parameters utilizes LongFormer~\citep{beltagy2020longformer} to learn text-based representation from items
    \item \textbf{BigRec}~\citep{bao2023bi} utilizes Lora techniques~\citep{hu2021lora} to fine-tune Llama 2~\citep{touvron2023llama} (with 7B parameters). Note that BigRec only utilizes 1024 samples to train due to large computational cost.
\end{itemize}

Meanwhile, we also cover three traditional recommender models:
\begin{itemize}
    \item \textbf{BPR}~\cite{BPR} utilized a pair-wise loss function to train a matrix factorization model for recommendation.
    \item \textbf{GRU4Rec}~\cite{gru4rec} utilized gated recurrent unit network to learn the historical behaviors of users.
    \item \textbf{SASRec}~\cite{SASRec} utilized attention network to learn the historical behaviors of users.
\end{itemize}

\textbf{Baselines.}

For the baselines, we choose several fair-aware re-weight baselines that aim to improve group MMF: 
\begin{itemize}
    \item \textbf{UNI}: each sample has the same weight during training.
    \item \textbf{DRO}~\citep{hashimoto2018fairness}: every step, the model only optimizes the worst-off groups to enhance grou MMF.
    \item \textbf{S-DRO}~\citep{wen2022distributionally}: improves DRO with the distributional shift to optimize group MMF.
    \item \textbf{Prop}~\citep{hu2023adaptive}  assigns higher group weight to the samples closer to the decision boundary in each group.
    \item \textbf{IFairLRS}~\citep{jiang2024itemside} employs the reciprocal of the sum popularity of items within the group as the weight assigned to that group. 
    \item \textbf{Maxmin Sample}~\citep{abernethy2022active} applies optimizing techniques to dynamically sample groups.
\end{itemize}

Meanwhile, we also choose three fair-aware non-LLMs recommender models that aim to improve group fairness:
\begin{itemize}
    \item \textbf{FOCF}~\citep{FOCF} applies a fair-aware regularization loss of different groups into non-LLMs RS.
    \item \textbf{Reg}~\citep{Reg} penalizes the squared difference between the average scores of two groups for all positive user-item pairs into non-LLMs RS.
    \item \textbf{FairNeg}~\citep{FairNeg} proposed a negative sampling way for pair-wise recommendation into non-LLMs RS. Note that FairNeg only can be applied to pair-wise RS models. 
\end{itemize}

\begin{table*}[t]
\caption{Performance comparisons between ours under other non-LLMs backbones on MIND dataset. The $*$ means the improvements are statistically significant (t-tests and $p$-value $< 0.05$). The bold number indicates that the accuracy value exceeds that of all the baselines.}\label{exp:non_llm_backbones}
\centering
\resizebox{0.98\linewidth}{!}{
    \centering
\begin{tabular}{@{}clrrrrrrrrr@{}}
\toprule
\multicolumn{2}{c}{\multirow{2}{*}{\textbf{Models/Metrics}}} & \multicolumn{3}{c}{top-5} & \multicolumn{3}{c}{top-10} & \multicolumn{3}{c}{top-20} \\ \cmidrule(l){3-5} \cmidrule(l){6-8} \cmidrule(l){9-11}
\multicolumn{2}{c}{} & NDCG (\%) & MRR (\%) & MMF (\%) & NDCG (\%) & MRR (\%) & MMF (\%) & NDCG (\%) & MRR (\%) & MMF (\%) \\ \midrule
  \multirow{10}{*}{\textbf{BPR}}  & DRO & 0.73 & 0.62 & 12.9 & 0.87 & 0.72 & 11.8 & 1.12 & 0.79 & 12.9 \\ 
 & Prop & 0.42 & 0.32 & 0.05 & 0.57 & 0.38 & 0.06 & 0.95 & 0.48 & 10.0 \\ 
 & S-DRO & 0.67 & 0.61 & 3.88 & 0.84 & 0.68 & 6.87 & 1.04 & 0.73 & 12.03 \\ 
 & IFairLRS & 0.68 & 0.57 & 0.13 & 0.77 & 0.61 & 0.23 & 1.07 & 0.69 & 1.38 \\ 
 & Maxmin sample & 0.66 & 0.58 & 6.54 & 0.81 & 0.64 & 8.8 & 1.05 & 0.71 & 10.87 \\
 & FOCF & 0.40 & 0.32 & 0.05 & 0.57 & 0.38 & 0.07 & 0.95 & 0.48 & 10.0\\
 & Reg & 0.67 & 0.61 & 3.27 & 0.83 & 0.67 & 5.89 & 1.06 & 0.73 & 11.25 \\
 & FairNeg & 0.72 & 0.63 & 6.07 & 0.91 & 0.71 & 8.8 & 1.21 & 0.79 & 12.64 \\
 \cmidrule(l){2-5} \cmidrule(l){6-8} \cmidrule(l){9-11}
 & \textbf{Ours} & \textbf{0.76}$^*$ & \textbf{0.64}$^*$ & \textbf{11.84}$^*$ & \textbf{0.94}$^*$ & \textbf{0.72} & \textbf{13.87}$^*$ & \textbf{1.27}$^*$ & \textbf{0.81} & \textbf{14.6}$^*$  \\
 \bottomrule
 \multirow{9}{*}{\textbf{GRU4Rec}}  & DRO & 0.56 &  0.56 &  0.86 &  0.76 &  0.64 &  5.56 &  1.13 &  0.71 &  10.7 \\ 
 & Prop & 0.42 &  0.35 &  7.94 &  0.63 &  0.44 &  10.19 &  0.90 &  0.51 &  13.10 \\ 
 & S-DRO & 0.45 &  0.36 &  11.42  &  0.67 &  0.44 &  12.05 &  0.97 &  0.53 &  13.15 \\ 
 & IFairLRS & 0.45 &  0.38  &  7.12   &  0.68 &  0.47  &  9.21  &  1.02 &  0.56  &  11.70 \\ 
 & Maxmin sample & 0.43 &  0.33 &  10.9 &  0.62 &  0.41 &  14.27 &  0.91 &  0.48 &  13.06 \\
 & FOCF & 0.56 &  0.41 &  5.62 &  0.79 &  0.63 &  7.11 &  1.10 &  0.70 &  10.29 \\
 & Reg & 0.45 &  0.37 &  6.93 &  0.67 &  0.46 &  8.60 &  1.02 &  0.55 &  10.92\\
 \cmidrule(l){2-5} \cmidrule(l){6-8} \cmidrule(l){9-11}
 & \textbf{Ours} & \textbf{0.59}$^*$ & \textbf{0.47}$^*$ & \textbf{12.13}$^*$ & \textbf{0.85}$^*$ & \textbf{0.68}$^*$ & \textbf{12.77}$^*$ & \textbf{1.16}$^*$ & \textbf{0.76}$^*$ & \textbf{14.09}$^*$  \\
 \bottomrule
 \multirow{9}{*}{\textbf{SASRec}}  & DRO & 0.54 &  0.40 &  8.07 &  0.72 &  0.47 &  11.34 &  1.11 &  0.57 &  12.26 \\ 
 & Prop & 0.54 &  0.45 &  11.69  &  0.80 &  0.55 &  12.10 &  1.16 &  0.57 &  13.01 \\ 
 & S-DRO & 0.49 &  0.40 &  10.66  &  0.74 &  0.49 &  11.64 &  1.09 &  0.59 &  14.02 \\ 
 & IFairLRS & 0.58 &  0.57 &  \textbf{12.63}  &  0.60 &  0.58 &  12.35 &  0.62 &  0.58 &  13.73 \\ 
 & Maxmin sample & 0.56 &  0.47 &  9.05 &  0.74 &  0.54 &  12.45 &  1.09 &  0.64 &  14.06 \\
 & FOCF & 0.47 &  0.46 &  10.52  &  0.50 &  0.47 &  12.73 &  0.53 &  0.48 &  14.46 \\
 & Reg & 0.47 &  0.38 &  9.42 &  0.70 &  0.47 &  9.52 &  1.03 &  0.55 &  10.91\\
 \cmidrule(l){2-5} \cmidrule(l){6-8} \cmidrule(l){9-11}
 & \textbf{Ours} & \textbf{0.64}$^*$ & \textbf{0.63}$^*$ & 11.98 & \textbf{0.78}$^*$ & \textbf{0.64}$^*$ & \textbf{13.08}$^*$ & \textbf{1.31}$^*$ & \textbf{0.67}$^*$ & \textbf{14.51}$^*$  \\
 \bottomrule
\end{tabular}
}
\end{table*}


\begin{figure}
    \centering
    \includegraphics[width=\linewidth]{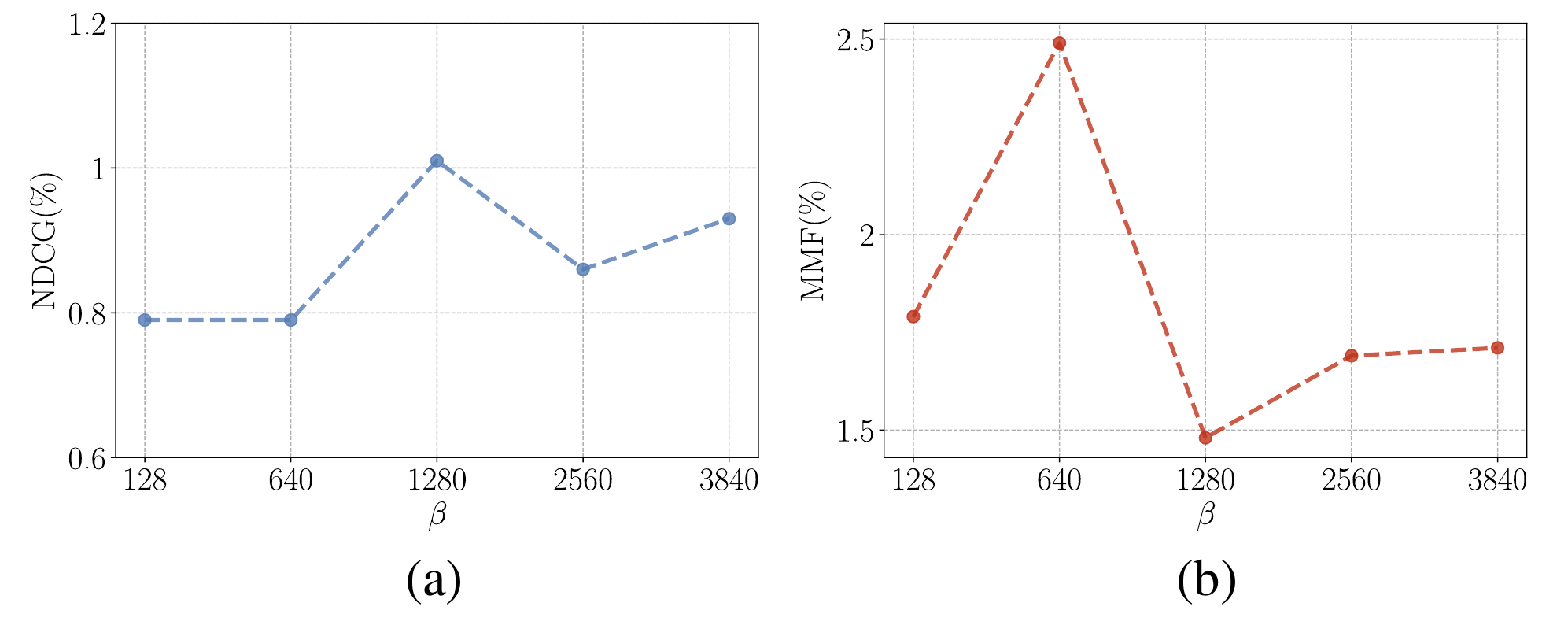}
    \caption{Sub-figure (a) and (b) describe the NDCG and MMF changes \wrt freeze parameter updating gap $\beta$. }
    \label{fig:para_analysis}
\end{figure}

\begin{figure}
    \centering
    \includegraphics[width=\linewidth]{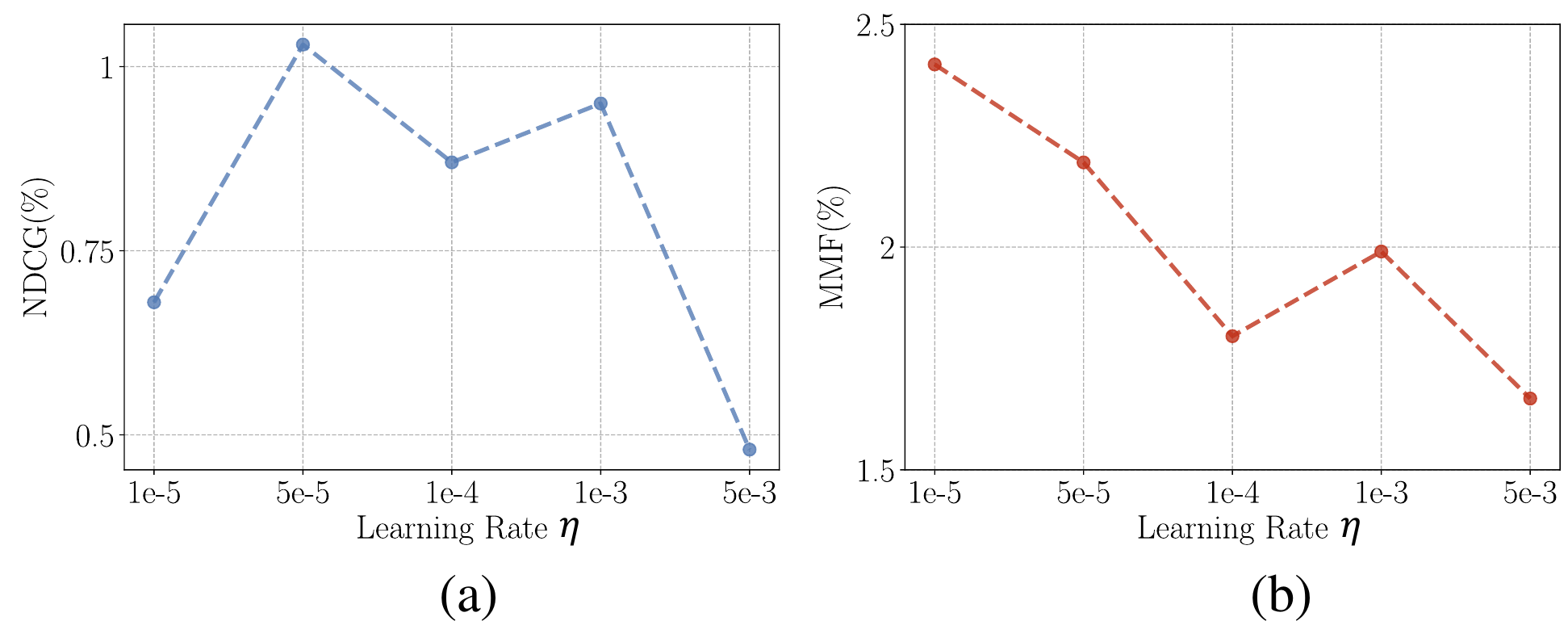}
    \caption{Sub-figure (a) and (b) describe the NDCG and MMF changes \wrt dual learning rate $\eta$.}
    \label{fig:para_analysis2}
\end{figure}

\begin{figure}
    \centering
    \includegraphics[width=\linewidth]{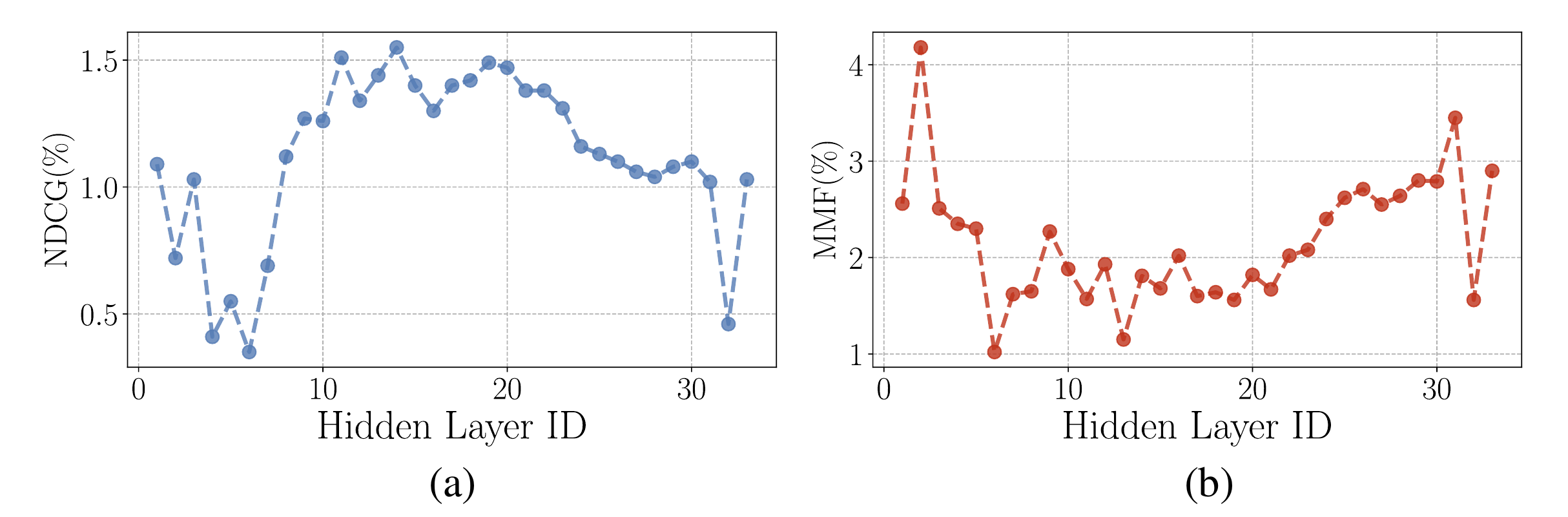}
    \caption{Sub-figure (a) illustrates the density distribution of embeddings for each hidden layer of Llama2. Sub-figures (b) and (c) depict the changes in NDCG and MMF metrics when different hidden layers are utilized to represent user or item embeddings. }
    \label{fig:emb_analysis}
\end{figure}

\begin{table}[t]
\centering
\small
\caption{We conduct empirical experiments to show the effect of the length of item-clicked sequences. The experiments are conducted on the MIND dataset under BigRec backbones. }
\label{tab:history_length}
\resizebox{0.98\linewidth}{!}{
\begin{tabular}{ccccccc}
\toprule
\textbf{History length $H$} & \textbf{NDCG@5 (\%)} & \textbf{MMF@5 (\%)} & \textbf{NDCG@10 (\%)} & \textbf{MMF@10 (\%)} & \textbf{NDCG@20 (\%)} & \textbf{MMF@20 (\%)} \\ \midrule
3                       & 0.79            & 1.88           & 1.35              & 2.79            & 1.85             & 3.23            \\
4                       & 0.81            & 1.63           & 1.36              & 2.65            & 1.99             & 3.21            \\
5                       & 1.15            & 2.82           & 1.69              & 2.99            & 2.28             & 3.39            \\
6                       & 1.04            & 2.57           & 1.64              & 2.66            & 2.26             & 3.29            \\
7                       & 1.02            & 3.27           & 1.40              & 3.5             & 2.16             & 4.29            \\ \bottomrule
\end{tabular}
}
\end{table}

\begin{table}[t]
\centering
\caption{We conduct empirical experiments to show the effect of the sample size $Q$. The experiments are conducted on the MIND dataset under BigRec backbones.}
\label{tab:sample_size}
\begin{tabular}{ccccccc}
\toprule
\textbf{sample size $Q$} & \textbf{50} & \textbf{100} & \textbf{200} & \textbf{300} & \textbf{400} & \textbf{full (unbiased)} \\ \midrule
NDCG(\%) & 1.08            & 1.08           & 1.15              & 1.19            & 1.19             & 1.29            \\
MMF(\%)                       & 1.2             & 1.28           & 2.18              & 2.10            & 2.29             & 2.31            \\
\bottomrule
\end{tabular}
\end{table}

\begin{table}[t]
\centering
\caption{The convergence time (performance stabilizing within 50 steps) of our method compared to other baselines under BigRec backbones on the MIND dataset. }
\label{tab:convergence}
\begin{tabular}{lllllll}
\toprule
Model            & DRO   & Prop  & S-DRO & IFairLRS & \textbf{FairDual(ours)} & Improvment \\ \midrule
Convergence time & 10.1h & 11.7h & 7.9h  & 7.1h     & \textbf{5h}             & 28.5\%  \\ \bottomrule
\end{tabular}
\end{table}


\section{Main Experiments on Non-LLMs Backbones.}\label{app:sec:non_llms}

We choose three non-LLMs recommender models: \textbf{BPR}~\citep{BPR}, \textbf{GRU4Rec}~\citep{gru4rec} and \textbf{SASRec}~\citep{SASRec}.
And we also compare three group fair-aware on-LLMs recommender models: \textbf{FOCF}~\citep{FOCF},  \textbf{Reg}~\citep{Reg}, and  \textbf{FairNeg}~\citep{FairNeg}. These models are compared using traditional recommender system backbones, as detailed in Appendix~\ref{app:sec:non_llms}.

From Table~\ref{exp:non_llm_backbones}, we further observe that FairDual consistently surpasses all baseline methods across various datasets under non-LLM backbones and different top-K ranking sizes. These results demonstrate that FairDual remains highly effective even when applied to non-LLM-based models.

Also note that another widely used loss function is the BPR loss~\cite{BPR}, which aims to increase the distance between positive and negative samples. Interestingly, from the table, we can observe that our methods can also be applied to this loss, as BPR loss is a convex function with respect to positive items, and our dual formulation remains valid.

\section{Analysis for Hyper-parameters}\label{app:para_analysis}
We also conduct analysis for other important hyper-parameters of FairDual on MIND dataset under BigRec base models.

\textbf{Inference on Updating Gap $\beta$.} We first will investigate the impacts of freeze parameter updating gap $\beta$. As shown in Figure~\ref{fig:para_analysis}, we can observe that the accuracy degree (NDCG) increases when $\beta\in [128, 1280]$ and then drops slightly when $\beta\in [1280,3840]$. Similarly,  we can observe that the fairness degree (MMF) increases when $\beta\in [128, 640]$ and then drops with a large margin when $\beta\in [640,3840]$. 
The results align with our expectations: excessively frequent updates can lead to instability during training, while infrequent updates may cause the model to miss new ranking patterns, ultimately affecting performance negatively.

\begin{table}[t]
\centering
\caption{Performances of other fairness metric Gini Index.}
\label{tab:gini}
\begin{tabular}{llll}
\toprule
Models            & GINI@5   & GINI@10  & GINI@20 \\ \midrule
Prop & 0.488 & 0.488 & 0.472   \\
DRO & 0.511 & 0.476 & 0.487 \\
SDRO & 0.503 & 0.478 & 0.453 \\
IFairLRS & 0.458 & 0.454 & 0.448 \\
\textbf{FairDual(ours)} & \textbf{0.444} & \textbf{0.450} & \textbf{0.441} \\
\bottomrule
\end{tabular}
\end{table}

\begin{table}[t]
\centering
\caption{Popularity bias effect utilizing Inverse Propensity Score (IPS)-based~\cite{xu2022dually} reweighting method.}
\label{tab:pop_bias}
\begin{tabular}{lllll}
\toprule
$\lambda$  &  0.1  & 1  & 2 & 5 \\ \midrule
\multicolumn{5}{c}{NDCG(\%)} \\
\midrule
 IPS & 0.58 & 0.58 & 0.58 & 0.58 \\
 FairDual & 0.53 & 0.60 & 0.57 & 0.67 \\ 
 FairDual+IPS & 0.59 & 0.56 & 0.56 & 0.58 \\ 
 \midrule
\multicolumn{5}{c}{MMF(\%)} \\
\midrule
 IPS & 12.63 & 12.63 & 12.63 & 12.63 \\
 FairDual & 4.50 & 11.98 & 13.46 & 13.76 \\
 FairDual+IPS & 10.90 & 12.40 & 12.40 & 14.36 \\ 
\bottomrule
\end{tabular}
\end{table}

\textbf{Inference on dual learning rate $\eta$.} We then investigate the impacts of dual learning rate $\eta$. As shown in Figure~\ref{fig:para_analysis2}, we can observe that the accuracy degree (NDCG) increases when $\eta\in [1e^{-5}, 5e^{-5}]$ and then drops when $\eta\in [5e^{-5}, 5e^{-3}]$. On the other hand, the fairness performance drops when $\eta$ goes larger.
The results demonstrate that the learning rate $\eta$ serves as a trade-off factor: excessively large values detrimentally affect both accuracy and fairness, whereas excessively small values improve fairness at the expense of accuracy in recommendation system models.


\textbf{Performances under Different Hidden Layers.} In this experiment, we aim to analyze the FairDual performance under different hidden layers in Llama2. We test the NDCG and MMF performance when we utilize different hidden layers to represent user or item embeddings. 
From Figure~\ref{fig:emb_analysis} (a) and (b), it is evident that the accuracy (NDCG) and fairness (MMF) trends exhibit distinct patterns: accuracy performance initially ascends, peaking in the middle layer before gradually declining, whereas fairness performance initially descends, hitting a nadir before steadily increasing.

This phenomenon can be interpreted as follows: in the initial layers, which are not yet fully trained, the recommendation system tends to suggest more random items, resulting in lower accuracy but higher fairness. As the layers deepen, the accuracy increases, but it also tends to recommend more unipolar items. Eventually, as the layers approach the last layer, our FairDual model emphasizes fairness more by adjusting the weights for the weaker groups. This will also help us to better understand the mechanisms of FairDual.

\textbf{Performances under different lengths $H$ of item-clicked sequences.} In Table~\ref{tab:history_length}, we conduct the empirical experiments to show the effect for the length of item-clicked sequences. The experiments are conducted on MIND dataset under BigRec backbones. 

From the experiments, we can observe that the length of history is a trade-off factor for the methods: initially, increasing the length improves accuracy and fairness, but once it reaches a peak, performance begins to drop. We analyze the reason as follows: the length of history sequences indeed influences performance. Sequences that are too short make it difficult to learn user preferences, while sequences that are too long increase computational costs and risk hitting the prompt limit of LLMs.

\textbf{Performances under different sample sizes $Q$.} Intuitively, a larger $Q$ provides a more accurate gradient estimation but also incurs higher computational costs. We have conducted experiments to evaluate the impact of Q and will present the results. The results were conducted under the same settings as the analysis section. The experiments were conducted under BigRec on MIND dataset with ranking size $K=5$.

From the results in Table~\ref{tab:sample_size}, we observe that increasing the sample value Q leads to improvements in both accuracy and fairness performance. However, in LLM-based recommender systems, a larger Q significantly increases training time (with each item requiring an additional $1.5$ seconds) and storage space. Different applications should select appropriate Q values based on their specific accuracy, fairness requirements, and computational constraints.

\section{Computational And Storage Costs}\label{app:computational}

In Table~\ref{tab:convergence}, we measured the convergence time (performance stabilizing within 50 steps) of our method compared to other baselines under BigRec backbones on MIND dataset.

Firstly, we all have parameters of the same magnitude (i.e., group size parameters (hundred level), which are in the range of hundreds and negligible compared to the backbone (million level)). 
Our method only requires additional space for $Q$ item embeddings and extra training time ($1.5Q$s). Applications can trade off $Q$ based on available resources (as discussed in a previous response).

Secondly, as observed in Table~\ref{tab:convergence} of the original paper, although there is an additional time overhead per round, our convergence speed accelerates by 30\% compared to the best baseline. This 30\% improvement in convergence speed is highly significant for industrial applications, along with enhanced performance.

\section{Performances on Other Fairness Metrics}\label{app:GINI}

We test the performances of another fairness metric Gini Index~\citep{do2022optimizing} Compared to the baselines (Table~\ref{tab:gini}) on MIND datasets. Note that a smaller Gini Index means more fairness. From the results, we can observe that our model can still perform well on other fairness metrics. We believe our paper can help other researchers explore its applicability to various loss functions, and other fairness metrics, which is also our contribution to the communities.

\section{Effect of Popularity Bias}\label{app:pop_bias}
Since the popularity bias will influence the accuracy estimation in real dataset shown in Figure~\ref{fig:analysis}, we conduct the experiments on a relatively light transformer-based SASRec~\citep{SASRec} backbones and MIND datasets. We apply the Inverse Propensity Score (IPS)-based~\cite{xu2022dually} to our method to see whether it can improve our methods.

Table~\ref{tab:pop_bias} shows the results on $K=5$ results. From the results, we can observe that when the $\lambda$ is small, adding the IPS will increase the accuracy and fairness with a large margin due to the popularity bias. However, when $\lambda$ is large, the FairDual+IPS will not perform very well. This is because IPS will break the convergence condition of FairDual. Therefore, when $\lambda$ is large, it is preferable not to involve IPS. We will include the related experiments and discussion in the Appendix of the revised paper.

\end{document}